\documentclass[aoas,preprint]{imsart}

\RequirePackage[OT1]{fontenc}
\RequirePackage{amsthm,amsmath}
\RequirePackage{natbib}
\RequirePackage[colorlinks,citecolor=blue,urlcolor=blue]{hyperref}
\usepackage{bbold}
\usepackage{tikz}
\usepackage{pgfplots}
\usepackage[ruled,vlined]{algorithm2e}

\usepackage{subcaption}
\usepackage{float}
\usepackage{comment}
\usepackage{adjustbox}
\usepackage{thmtools}

\usepackage[format=plain,labelfont=bf,textfont=it, font=small]{caption}

\usepackage{graphicx}
\usepackage{amsfonts}
\usepackage{wrapfig}

\newcommand{\Prob}{\ensuremath{{\mathbb P}}} 

\newcommand{\R}{\ensuremath{{\mathbb R}}}
\newcommand{\E}{\ensuremath{{\mathbb E}}}

\def\nottilde#1{\overset{\ifx#1f\hspace{.5ex}\fi\lower0.8ex\hbox{\tiny$\nsim$}}{#1}}

\usepackage{lscape}

\newcommand{\Ind}{\ensuremath{{\mathds{1}}}} 
\usepackage{dsfont}
\usepackage{caption}
\usepackage{float}
\usepackage{amsmath,amssymb}
\newcommand{\A}{\ensuremath{{\mathcal A}}}

\newcommand{\distas}[1]{\mathbin{\overset{#1}{\kern\z\sim}}}%
\newsavebox{\mybox}\newsavebox{\mysim}
\newcommand{\distras}[1]{%
  \savebox{\mybox}{\hbox{\kern3pt$\scriptstyle#1$\kern3pt}}%
  \savebox{\mysim}{\hbox{$\sim$}}%
  \mathbin{\overset{#1}{\kern\z@\resizebox{\wd\mybox}{\ht\mysim}{$\sim$}}}%
}

\usepackage{varioref}


\startlocaldefs
\numberwithin{equation}{section}
\theoremstyle{plain}

\newtheorem{definition}{Definition}[section]

\newtheorem{example}[definition]{Example}
\endlocaldefs

\begin{document}


\begin{abstract}
 We develop a fully non-parametric, easy-to-use, and powerful test for the missing completely at random (MCAR) assumption on the missingness mechanism of a dataset. The test compares distributions of different missing patterns on random projections in the variable space of the data. The distributional differences are measured with the Kullback-Leibler Divergence, using probability Random Forests \citep{probabilityforests}. We thus refer to it as ``Projected Kullback-Leibler MCAR'' (PKLM) test. The use of random projections makes it applicable  even if very few or no fully observed observations are available or if the number of dimensions is large. An efficient permutation approach guarantees the level for any finite sample size, resolving a major shortcoming of most other available tests. Moreover, the test can be used on both discrete and continuous data. We show empirically on a range of simulated data distributions and real datasets that our test has consistently high power and is able to avoid inflated type-I errors. Finally, we provide an \textsf{R}-package \texttt{PKLMtest} with an implementation of our test.
\end{abstract}

\begin{frontmatter}
\title{PKLM: A flexible MCAR test using Classification } 
\runtitle{PKLM}
\thankstext{T1}{Authors with equal contribution.}
\thankstext{T2}{We are grateful to Jun Li for providing us with parts of their code.}

\begin{aug}
\author{\fnms{Meta-Lina} \snm{Spohn}*},
\author{\fnms{Jeffrey} \snm{Näf}*}, %
\author{\fnms{Loris} \snm{Michel}},
\ead[label=u1,url]{http://www.foo.com}
\and
\author{\fnms{Nicolai} \snm{Meinshausen}}


\affiliation{ETH Zürich}

\address{Seminar for Statistics\\
ETH Zürich\\
Rämistrasse 101\\
8092 Zürich\\
Switzerland\\
E-mail: metalina.spohn@stat.math.ethz.ch}

\end{aug}


\begin{keyword}
\kwd{Random Projections}
\kwd{Tree Ensembles}
\kwd{Random Forest}
\kwd{KL-Divergence}
\kwd{Permutation}
\end{keyword}

\end{frontmatter}

\section{Introduction}


Dealing with missing values is an integral part of modern statistical analysis. In particular, the assumed mechanism leading to the missing values is of great importance. Based on the work of \citet{Rubin1976}, there are three groups of missingness mechanisms usually considered: The values may be missing completely at random (MCAR), meaning the probability of a value being missing does not depend on the observed or unobserved data. In contrast, the probability of being missing could depend on observed values (missing at random, MAR) or on unobserved values (missing not at random, MNAR).

As stated in \citet{yuan2018}, ``a formal confirmation of the MCAR missing data mechanism is of great interest, simply because essentially all methods can still yield consistent estimates under MCAR even if the underlying population distribution is unknown''. While there is, at least for imputation, a number of approaches that can deal with a MAR missing data mechanism such as Multivariate Imputation by Chained Equations (mice) \citep{buuren_mice, Deng2016}, many commonly used methods explicitly rely on the validity of the MCAR assumption. Examples are the easy-to-use listwise-deletion and mean-imputation methods \citep{RubinLittlebook}. Consequently, the original paper on MCAR testing \citep{littletest} has been cited over $7600$ times according to google scholar. Recent papers (involving psychometric analysis) that test the MCAR assumption in order to justify listwise-deletion include
\cite{paperusingMCARtest4},
\cite{paperusingMCARtest2},
\cite{paperusingMCARtest1},  \cite{paperusingMCARtest3},  \cite{paperusingMCARtest5}, and \cite{paperusingMCARtest6}. As such, it is important to reliably test the MCAR assumption. 




The testing framework is of an ANOVA-type: when observing a dataset with missing values, there are $n$ observations and $G$ missingness patterns, $g=1,\ldots,G$. The observations belonging to the missingness pattern $g$ can be seen as a group, such that we observe $G$ groups of observations. The MCAR hypothesis now implies that the distribution of the observed data in all groups is the same, while under the alternative at least two differ. This is technically testing the observed at random (OAR) assumption defined in \citet{Rhoads2012}, see also the end of Section \ref{scoredefsec} for a discussion. This distinction can be avoided by assuming the missingness mechanism is MAR, which is what is usually implicitly done \citep{Li2015}.


The idea of testing the MCAR assumption traces back to \citet{littletest}. While some more refined versions of this testing idea were developed since then \citep{chen_Little, Kim2002, Jamshidian2010}, there has not been a lot of progress on distribution-free MCAR tests, able to detect general distributional differences between the missingness patterns. \citet{Li2015} recently made a step in that direction. Their test is completely nonparametric and shown to be consistent. Empirically it is shown to keep the level and to have a high power over a wide range of distributions. An application area where their proposed test struggles is for higher-dimensional data with little or no complete observations. Their testing paradigm is based on ``a reasonable amount'' of complete cases and all pairwise comparisons between the observed parts of two missingness pattern groups. This is problematic, since, as the dimension $p$ increases, the number of distinct patterns $G$ tends to grow quickly as well. The most extreme case occurs when $G = n$, that is, every observation forms a missingness pattern group on its own. Consequently, their test appears computationally prohibitively expensive for $p > 10$. Additionally, as the dimension increases, both the number of complete cases and the number of observations per pattern tends to decrease, both contributing to a reduction in power for the test in \citet{Li2015}.

In this paper, we try to circumvent these problems in a data-efficient way, by employing a one v.s. all-others approach and using \emph{random projections} in the variable space. Considering observations that are projected into a lower-dimensional space allows us to recover more complete cases. As realized by \citet{Li2015}, the problem of MCAR testing, as described above, is a problem of testing whether distributions across missingness patterns are different. The method presented here relies on some of the core ideas of \citet{michel2021proper} and \citet{Cal2020}, who do distributional testing using classifiers. We extend the ideas of \citet{Cal2020} to be usable for multiclass classification and use the projection idea of \citet{michel2021proper} to build a test that is usable and powerful even for high dimensions. Moreover, using a permutation approach, we are able to provably keep the nominal level $\alpha$ for all $n$. As outlined later, this is in contrast to other tests, for which the level might be kept only asymptotically, or is even unclear. The approach of random projections together with a permutation test also allows to extract more information than just a global hypothesis test. We make use of this to calculate individual $p$-values for each variable. Such a partial test for a variable addresses the null hypothesis that, once that variable is removed, the data is MCAR. Together with the test of overall MCAR, this might point towards the potential source of deviation from the null, that is, the variables causing an MCAR violation.



The paper is structured in the following way. Section \ref{problemformulationsection} introduces notation. Section \ref{scoredefsec} details the testing framework including the null and alternative hypotheses we consider. Section \ref{pracaspects} then showcases how to perform this test in practice and details the algorithm. Section \ref{empresults_1} shows some numerical comparisons for type-I error control and power. Section \ref{extension} explains the extension of partial $p$-values, while Section \ref{discuss} concludes. Appendix \ref{proofsection} contains the proofs of all results, while Appendix \ref{app:comptime} adds some additional details and shows computation times of the different tests.

\subsection{Contributions}

Our contributions can be summarized as follows: We develop the PKLM-test, an easy-to-use and powerful non-parametric test for MCAR, that is applicable even in high dimensions. We thereby extend the testing approach of \citet{Cal2020} to multiclass testing, which in connection with random projections in the variable space and the Random Forest classifier leads to a powerful test for both discrete and continuous types of data. To the best of your knowledge, no other test is as widely applicable and powerful. Moreover, we are able to formally prove the validity of our $p$-values for any sample size and number of groups $G$. As we demonstrate in our simulations, this is remarkable for the MCAR testing literature. It appears no other MCAR test has such a guarantee and many have inflated type-I errors, even in realistic cases, see e.g. the discussion in \citet{Jamshidian2010}.

As an extension, we can compute partial $p$-values corresponding to each variable, addressing the question of the source of violation of MCAR among the variables. We demonstrate the validity and power of our test on a wide range of simulated and real datasets in conjunction with different MAR mechanisms. Finally, we make our test available through the \textsf{R}-package \texttt{PKLMtest}, available on  \url{https://github.com/missValTeam/PKLMtest} and on CRAN. 




\subsection{Related Work}\label{relwork}

Previous advances for tests of MCAR were mostly addressed by \citet{littletest} (referred to as ``Little-test'') and extensions \citep{chen_Little, Kim2002} under the assumption of joint Gaussianity. To the best of our knowledge, the only distribution-free tests are developed in \citet{Jamshidian2010}, \citet{Li2015} and \citet{empiricallikelihoodapproach}. The first paper develops a test (referred to as ``JJ-test''), which is distribution-free but is only able to spot differences in the covariance matrices between the different patterns. As such, the simulation study in \citet{Li2015} shows that their test (referred to as ``Q-test''), which can detect any potential difference, has much more power than the JJ-test. Moreover, the JJ-test requires prior imputation of missing values, which appears undesirable. \citet{empiricallikelihoodapproach} develop a test that can be used to subsequently also consistently estimate certain estimators under MCAR. Their test requires a set of fully observed ``auxiliary'' variables that can be used to first test and then estimate properties of some variable of interest. As such their approach and goals are quite different from ours. 

Consequently, the test closest to ours is the fully non-parametric method in \citet{Li2015}. However, it is computationally costly or even infeasible to use their test with dimensions typically found in modern datasets ($p \gg 10$), as all pairwise comparisons between missingness patterns are calculated. While this could in principle be avoided by only checking a subset of pairs, we empirically show that, even if all pairwise comparisons are performed, our test has comparable or even higher power than theirs in their own simulation setting. This gap only increases with the number of dimensions or with a decrease in the fraction of fully observed cases.





We also address a major issue in the MCAR testing literature: none of the proposed methods has a finite sample guarantee of producing valid $p$-values and for some it can even be empirically checked that the produced $p$-value is not valid in certain settings. If $Z$ is a $p$-value generated from a statistical test, then it is valid if $\Prob(Z \leq \alpha) \leq \alpha$ under $H_0$ for all $\alpha \in [0,1]$, see e.g., \citet{lehmann2005testing}. Figure \ref{fig:ecdf} in Section \ref{empresults_1} shows some example of previous tests violating this validity of $p$-values. This issue might be surprising since the requirement of a valid $p$-value might be the most basic demand a statistical test needs to meet. For the Little-test, this is generally true under normality or asymptotically, that is if the number of observations is going to infinity, under some moment conditions and conditions on the group size. Despite this, Section \ref{empresults_1} shows that type error rates can strongly exceed the desired level even in samples of $500$ observations. The same holds for the JJ-test of \citet{Jamshidian2010} for which we sometimes observed a strong inflation of the level. As with the JJ-test, \citet{Li2015} also do not provide a formal guarantee that the level is kept. Though in our own simulation study, which is similar to theirs, we did not find any notable violation of the level for their test. 


To conduct our test, we adapt and partially extend the approaches of \citet{Cal2020} and \citet{michel2021proper}. The former develops a two-sample test using classification, an approach that has gained a lot of attention in recent years (see e.g., \citet{DPLBpublished} or \citet{hediger2020use} for a literature overview). We extend this approach to multiclass testing, to obtain a test statistic akin to \citet{Cal2020}, but using the out of bag (OOB) probability estimate of the Random Forest (RF) instead of the in-sample probability. This was already hinted in \citet{hediger2020use} to increase the power of the two-sample testing approach designed by \citet{Cal2020}. \citet{michel2021proper}, on the other hand, use random projections to increase the sample efficiency in the presence of missing values. This simple idea makes our test applicable and powerful, even in high dimensions, and even if the number of patterns $G$ is the same as the number of observations. It can also provide additional information together with the rejection decision, as we demonstrate in Section \ref{extension}. Finally, through an efficient permutation testing approach, we are able to formally guarantee that our test produces valid $p$-values for \emph{any} $n$ and any number of groups $G$. It appears that the PKLM-test is the first MCAR test with such a guarantee.  
Table~\ref{advantagetable} summarizes some of the properties of different tests. In particular, ``mixed data types'' refers to a possible combination of continuous data (such as income) and discrete data (such as gender), while ``power beyond differences in first and second moments'' means the test is able to detect differences between distributions, even if their means or variances are identical. Though this is difficult to show formally, it appears quite clear that the nonparametric nature of our approach allows for the detection of differences in distributions between patterns, even if the missingness groups all share the same mean or covariance matrix. As outlined in \citet{yuan2018} this is crucial for the detection of general MCAR deviations and is not the case, for instance, for the widely used Little-test. Appendix \ref{app:equalgroupmeansandvar} studies a simulated MAR example taken from \citet{yuan2018}, whereby observed means and variances are approximately the same across different groups. Tests such as the Little-test have no power in this example, yet with our approach, we reach a power of $1$.

\begin{table}[H]
    \resizebox{1\textwidth}{!}{%
\begin{tabular}{l|l|l|l|l}
& PKLM                       & Q                   & Little & JJ    \\ \hline
Computational Complexity                                        & $\mathcal{O}(p n \log(n))$  & $\mathcal{O}(n^2 p)$  &     $\mathcal{O}(n p^2) $       &     $\mathcal{O}(n (p^2 + \log(n)))$       \\ \hline
Can be used without       & Yes                           & No                                        &  No     &       Yes       \\
complete observations         &                            &                                         &        &                \\ \hline
Mixed data types possible                         & Yes                           & No                                        & No     & No             \\ \hline
 Does not require initial imputation & Yes & Yes & Yes & No\\ \hline
Power beyond differences                        & Yes                         & Yes                                        & No     & No             \\
 in first and second moments                          &                          &                                         &      &              \\ \hline
\end{tabular}}
\caption{Illustration of some of the properties of various tests. For details on the calculation of the computational complexities we refer to Appendix \ref{app:comptime}.
\label{advantagetable}}





\end{table}


\section{Notation}\label{problemformulationsection}

We assume an underlying probability space $(\Omega, \mathcal{F}, \Prob)$ on which all random elements are defined. Along the lines of \cite{josse} we introduce the following notation: let $\mathbf{X}^* \in \mathbb{R}^{n \times p}$ be a matrix of $n$ complete samples from a distribution $P^*$ on $\R^p$. We denote by $\mathbf{X}$ the corresponding incomplete dataset that is actually observed. Alongside $\mathbf{X}$ we observe the missingness matrix $\mathbf{M} \in \{0,1\}^{n\times p}$, of which an entry $m_{ij} \in \{0,1\}$ is $1$, if entry $x^*_{ij}$ is missing, and $0$, if it is observed. Each unique combination in $\{0,1\}^p$ in $\mathbf{M}$ is referred to as a missingness pattern and we assume that there are $G \leq n$ unique patterns in $\mathbf{M}$. As an example, for $p=2$, we might have the pattern $(1,0)$ (first value missing, second observed), $(0,1)$ (first value observed, second missing) or $(0,0)$ (both values are observed). We do not consider the completely missing pattern, in this case $(1,1)$. 


We assume that each row $x_{i}$ ($x^*_{i}$) of $\mathbf{X}$ ($\mathbf{X}^*$) is a realization of an i.i.d. copy of the random vector $X$ ($X^*$) with distribution $P$ ($P^*$). Similarly, $M$ is the random vector in $\{0,1\}^p$ encoding the missingness pattern of $X$. Furthermore we assume that $P$ ($P^*$) has a density $f$ ($f^*$) with respect to some dominating measure. For a random vector $X$ or an observation $x$ in $\R^p$ and subset $A \subseteq \{1,\ldots,p\}$, we denote as $X_A$ ($x_A$) the projection onto that subset of indices. For instance if $p=3$ and $A=\{1,2\}$, then $X_A=(X_1, X_2)$ ($x_A=(x_1, x_2)$). For any set $C \subseteq \{1,\ldots, p \}$, we denote by $\mathbf{X}_{\bullet C}$ the matrix of $n$ observations projected onto dimensions in $C$, so that $\mathbf{X}_{\bullet C}$ is of dimension $n \times |C|$. Similarly, for $R \subseteq \{1,\ldots,n \}$, $\mathbf{X}_{R \bullet}$ denotes the matrix of observations in set $R$, over all dimensions, so that the dimension of $\mathbf{X}_{R \bullet}$ is given by $|R| \times p$. We denote by $F_g$ (respectively $f_g$) the complete distribution (density) of the data in the $g^{th}$ missingness pattern group. A quick overview of the notation including the use of indices for the number of missingness patterns, dimensions, observations, projections and permutations is given in Table \ref{Notationtable}.

\begin{table}[!htbp]
    \centering
    \begin{center}
 \begin{tabular}{|c|c c |} 
 \hline
 notation &  partial   &    full   \\ [0.5ex] 
 \hline\hline
distribution & $P$  & $P^*$   \\ 
dataset & $\mathbf{X}$ & $\mathbf{X}^*$  \\
observation in $\R^p$ & $x_i$ & $x_i^*$   \\
random vector & $X$ & $X^*$  \\
density & $f$ & $f^*$  \\
\hline \hline
number of missingness patterns & $G$ &\\
number of dimensions &  $p$ &\\
number of observations &   $n$ & \\
number of projections &  $N$ & \\
number of permutations & $L$ & \\
 \hline
\end{tabular}
\end{center}
    \caption{\textbf{Notation}: Summary of the notation used throughout the paper, with (``partial'') and without (``full'') considering the missing values.}
    \label{Notationtable} 
\end{table}

\section{Testing Framework}\label{scoredefsec}

In this section, we formulate the specific null and alternative hypotheses for testing MCAR considered by the PKLM-test. Recalling the notation of Section \ref{problemformulationsection}, a missingness pattern is defined by a vector of length $p$, consisting of ones and zeros, indicating which of the $p$ variables are missing in the given pattern.
We divide the $n$ observations into $g \in \{1,\ldots, G\}$ unique groups, such that the observations of each group share the same missingness pattern. Each group $g \in \{1,\ldots, G\}$ contains $n_g$ observations such that $n_1 + \ldots + n_G = n$. Let $F_g$ denote the joint distribution of the $p$ variables in the missingness pattern group $g$, such that the $n_g$ observations of the group $g$ are i.i.d. draws from $F_g$. As stated in \cite{Li2015}, testing MCAR can be formulated by the hypothesis testing problem
\begin{align}\label{test1}
    &H_0: F^*_1=F^*_2= \ldots= F^*_G  \notag\\ 
    &\text{v.s.} \\ 
    &H_A: \exists \ i\neq j \in \{1,\ldots, G\} \ \text{s.t.} \ F^*_i \neq F^*_j. \notag
\end{align}


We want to emphasize the use of $F^*$ in the testing problem \eqref{test1}, indicating that these hypotheses involve distributions we cannot access. Thus, \eqref{test1} needs to be weakened. Borrowing the notation of \cite{Li2015}, for missingness pattern group $g$ we denote with $\boldsymbol o_g$ and $\boldsymbol m_g$ the subsets of $\{1,\ldots ,p\}$ indicating which variables are observed and which are missing, respectively. We denote the induced distributions by $F_{g,\boldsymbol o_g}$ and $F_{g,\boldsymbol m_g}$. For two groups $i$ and $j$, we denote by $\boldsymbol o_{ij} := \boldsymbol o_i  \cap \boldsymbol o_j $ the shared observed variables of both groups. As mentioned in \cite{Li2015}, it is not possible to test \eqref{test1} reliably, since the distribution $F_{i,\boldsymbol m_i}$ of the unobserved variables is inaccessible. Thus, \citet{Li2015} consider the following hypothesis testing problem
\begin{align}
    &H_0: F_{i, \boldsymbol o_{ij}} = F_{j, \boldsymbol o_{ij}} \text{   }  \forall i \neq j \in \{1,\ldots, G\} \notag\\ 
    &\text{v.s.} \label{test2}\\ 
    &H_A: \exists \ i\neq j \in \{1,\ldots, G\} \text{ with } \boldsymbol o_{ij} \neq \emptyset \text{ s.t. }  F_{i, \boldsymbol o_{ij}} \neq F_{j, \boldsymbol o_{ij}}.\notag
\end{align}

The null hypothesis $H_0$ of \eqref{test2} is implied by $H_0$ of \eqref{test1}, but not vice-versa. In other words, if we can reject the null hypothesis of \eqref{test2}, we can also reject the null hypothesis of \eqref{test1}. But if the null hypothesis of \eqref{test2} cannot be rejected, there could still be a distributional change for different groups in the unobserved parts, so that the null hypothesis of \eqref{test1} is not true. In this case, the missingness mechanism would be MNAR. Thus, using the terminology of \citet{Rhoads2012}, \eqref{test2} tests the ``observed at random'' (OAR) hypothesis instead of the MCAR hypothesis. The differentiation can be circumvented by assuming that the missingness mechanism is MAR, which is the approach usually taken, see \citet{Li2015}.

The comparison of all pairs of missingness groups in the hypothesis testing problem \eqref{test2} is problematic however, as laid out in the introduction. In the following, we circumvent this problem in a data-efficient way, considering a one v.s. all-others approach and employing \emph{random projections} in the variable space. Considering observations that are projected into a lower-dimensional space allows us to recover more complete cases. Let $\mathcal{A}$ be the set of all possible subsets of $\{1,\ldots,p \}$ with at most $p-1$ elements. For $A \in \mathcal{A}$ we define by $\mathcal{N}_{A}$ the indices in $1,\ldots, n$ of observations that are observed with respect to projection $A$, i.e., observations of which the projection onto $A$ is fully observed. These observations may belong to different missingness pattern groups $g \in \{1,\ldots , G\}$. As an example, $x = (\texttt{NA},1,\texttt{NA},2, 4)$ and $y = (\texttt{NA},\texttt{NA},\texttt{NA},1,3)$ are not complete and not in the same group, however if we project them to the dimensions $A = \{4,5\}$, $x_A$ and $y_A$ are complete in this lower-dimensional space.

Additionally, to circumvent the problem of many groups with only a few members, we assign new \textit{grouping or class labels} to all observations in $\mathcal{N}_A$. To do so, we consider the set of projections $\mathcal{B}(A^c)$, which is defined as the power set of $\{1,\ldots, p\}\setminus A$. The set $\mathcal{B}(A^c)$ is never empty since $|A| \leq p-1$. For a given projection $B \in \mathcal{B}(A^c)$, we project all observations with indx in $\mathcal{N}_A$ to $B$ and form new collapsed missingness pattern groups $G(A,B)$, where $G(A,B)$ is the set of labels corresponding to distinct missingness patterns among observations with index in $\mathcal{N}_A$ projected to $B$. This is solely done to determine the grouping or class labels of observations with index in $\mathcal{N}_A$. If two observations with index in $\mathcal{N}_A$ are in the same overall missingness pattern group $g \in \{1, \ldots, G\}$, they also end up in the same collapsed group. The other direction is not true, that is the number of collapsed groups $|G(A,B)|$ is at most as large as the initial number of distinct groups $G$ among the observations with index in $\mathcal{N}_A$. Considering again $x = (\texttt{NA},1,\texttt{NA},2, 4)$ and $y = (\texttt{NA},\texttt{NA},\texttt{NA},1,3)$, if $B = \{1,2\}$, then observations $x$ and $y$ are not in the same missingness pattern group. However, if $B = \{1,3\}$, we assign the same class label to $x$ and $y$. Thus, given the projection $A$, we obtain a set of fully observed observations $\mathbf{X}_{\mathcal{N}_{A}, A}=\mathbf{X}_{\mathcal{N}_{A}, A}^*$, and given the projection $B$ we assign to them the $|G(A,B)|$ different class labels. Figure \ref{fig:IllustrationofPKLMapproach} provides a schematic illustration of projections $A$ and $B$ on a more complicated example with four observations, each corresponding to a different pattern (i.e., $n=G=4$). According to $B= \{2\}$, the first observation in $\mathbf{X}_{\mathcal{N}_{A}, A}$ obtains one collapsed class label whereas the second and third observation obtain another, common label, resulting in $|G(A,B)|=2.$

We are now equipped to formulate our one v.s. all-others approach with the hypothesis testing problem 
\begin{align}
    &H_0: F_{g,A} = \sum_{j\in G(A,B) \setminus g} \omega_j^g F_{j,A} \notag\\
    &\forall g \in  G(A,B), \forall B \in \mathcal{B}(A^c), \forall A \in \mathcal{A}\notag\\ 
    &\text{v.s.} \label{test3}\\ 
    &H_A:  F_{g,A} \neq \sum_{j\in G(A,B)\setminus g} \omega_j^g F_{j,A}.\notag\\
    & \text{for one} \ g \in G(A,B), B \in \mathcal{B}(A^c), A \in \mathcal{A}, \notag
\end{align}
where $F_{g,A}$ is the joint distribution of the observations of class $g$ with index in $\mathcal{N}_A$ and the groups $j\in G(A,B)$ are jointly determined by $A$ and $B$. Thus, we compare the distribution of the observed part with respect to $A$ of one group $g$ with the mixture of the observed parts of the rest of the groups. The weights $\omega_j^g$ are non-negative, sum to $1$, and are proportional to the respective fraction of observations in class $j$. 


\begin{figure}
    \centering
    \includegraphics[width=8cm]{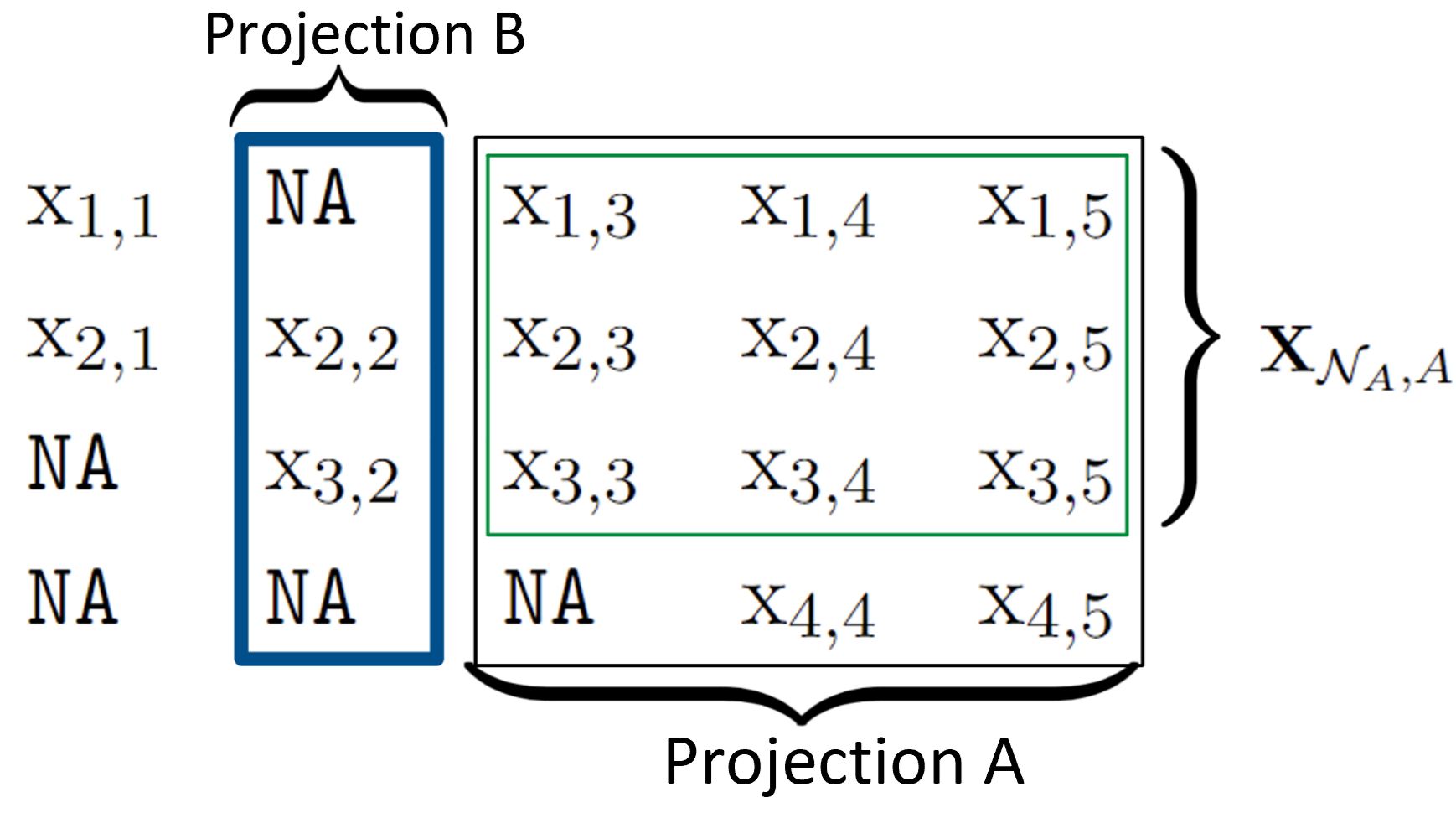}
    \caption{Illustration of the projections $A$ and $B$ in an example with $n=4$ and $p=5$. In a first step, a projection $A = \{3,4,5\} \subset \{1, \ldots, 5 \}$ is drawn. The fully observed points on $A$ form $\mathbf{X}_{\mathcal{N}_A,A}$, as indicated in green. In a second step, a projection $B= \{2\} \subset \{1, \ldots, 5 \} \setminus A$ is drawn, as indicated in blue. The patterns in projection $B$ then determine the labels assigned to the observations in $\mathbf{X}_{\mathcal{N}_A,A}$. In this case we obtain two different class labels: the first observation has one label, and the second and third observations share another common label.}
    \label{fig:IllustrationofPKLMapproach}
\end{figure}

\begin{example}
\textit{To give some intuition about the hypothesis testing problem \eqref{test3}, we relate it to the hypothesis testing problem \eqref{test2} with the help of the example of Figure \ref{fig:IllustrationofPKLMapproach}. In this example, each observation $i=1,\ldots, 4$ has a different pattern and can thus be seen as a draw from a distribution $F^*_i$. We first assume that the null hypothesis of \eqref{test3} holds and show, as an example, that this implies $F_{1,\boldsymbol o_{13}}=F_{3, \boldsymbol o_{13}}$. Since the null hypothesis of \eqref{test3} refers to all $A \in \A$, it also includes $A=\boldsymbol o_{13}=\{3,4,5\}$, which is what we consider in Figure \ref{fig:IllustrationofPKLMapproach}. While we are only interested in $F_{1,A}$ and $F_{3,A}$, taking $B=\{1,2\}$ the observations in $\mathbf{X}_{\mathcal{N}_A,A}$ come from the three distributions $F_{1,A}, F_{2,A}, F_{3,A}$. Due to \eqref{test3} it holds that \begin{align}\label{test3ex}
\begin{split}
    F_{1,A} &=  \omega_2^1 F_{2,A} + \omega_3^1 F_{3,A}, \\
    F_{2,A} &=  \omega_1^2 F_{1,A} + \omega_3^2 F_{3,A}, \\
    F_{3,A} &=  \omega_1^3 F_{1,A} + \omega_2^3 F_{2,A}.
    \end{split}
\end{align}
Some algebra shows that equation system \eqref{test3ex} is equivalent to $F_{1,A}=F_{2,A}=F_{3,A}$, which in particular means $F_{1,A}=F_{3,A}$, that we wanted to show. While we took $i=1$ and $j=3$ as an example matching Figure \ref{fig:IllustrationofPKLMapproach}, we cycle through all $A \in \A$ in \eqref{test3} and thus $A=\boldsymbol o_{ij}$ for all patterns $i,j$ eventually. We now assume that the null hypothesis of \eqref{test2} is true and consider again $A=\{3,4,5\}$ as an example. Since we only look at the fully observed observations in $\mathcal{N}_A$ in \eqref{test3}, i.e., leave out the fourth point, we again deal with the three distributions $F_{1,A}$, $F_{2,A}$, $F_{3,A}$. Moreover, by construction, $A \subset \boldsymbol o_{12}$ and $A \subset \boldsymbol o_{13}$ (even $A = \boldsymbol o_{13}$ in this case). Thus, $F_{1,\boldsymbol o_{12}}=F_{2,\boldsymbol o_{12}}$ and $F_{1,\boldsymbol o_{13}}=F_{3,\boldsymbol o_{13}}$, implied by the null hypothesis of \eqref{test2}, means that $F_{1,A}=F_{2,A}=F_{3,A}$, which implies \eqref{test3ex}. Again this might seem constructed, but since by definition, \eqref{test3} only considers the distributions $F_{i,A}$ and $F_{j,A}$ of fully observed points on $A$, it will always hold that $A \subset \boldsymbol o_{ij}$. }
\end{example}

We make note of an abuse of notation in \eqref{test3}, as the group $g$ in $F_{g,A}$ only corresponds to the same index of $F_g$ in \eqref{test2}, if $B=A^c$, as can be seen in the example of Figure \ref{fig:IllustrationofPKLMapproach}: If $B=A^c$, the three observations in $\mathbf{X}_{\mathcal{N}_A,A}$ are drawn from $F_{1,A}, F_{2,A}$ and $F_{3,A}$ respectively. However, if $B=\{2\}$, then observations two and three are now assumed to be drawn from a single distribution, which corresponds to a mixture of $F_{2,A}$ and $F_{3,A}$.  

In short, the null hypothesis of \eqref{test3} implies the null hypothesis of \eqref{test2} because for $A=\boldsymbol o_{ij}$, observations coming from $F_{i,A}$ and $F_{j,A}$ are contained in $\mathbf{X}_{\mathcal{N}_A,A}$. Vice-versa, the null hypothesis of \eqref{test2} implies the null hypothesis of \eqref{test3} because $A$ is nested in $\boldsymbol o_{ij}$ for all $F_i$ and $F_j$ considered on $A$. This actually sketches the proof of the following result:

\begin{restatable}{proposition}{propequivalence}
\label{equalitythmlabel}
Hypothesis testing problem \eqref{test3} is equivalent to \eqref{test2}.
\end{restatable}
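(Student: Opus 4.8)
The plan is to prove equivalence of the two null hypotheses by establishing implications in both directions, relying on the structural observation already emphasized in the text: for any pair of groups $i \neq j$, if we take $A = \boldsymbol o_{ij}$, then by definition $A \subseteq \boldsymbol o_i$ and $A \subseteq \boldsymbol o_j$, so observations from both patterns $i$ and $j$ are fully observed on $A$ and hence contained in $\mathbf{X}_{\mathcal{N}_A, A}$. First I would set up the key algebraic lemma underlying the Example: for a fixed projection $A$ and fixed $B \in \mathcal{B}(A^c)$, the system of equations in \eqref{test3} stating $F_{g,A} = \sum_{j \in G(A,B) \setminus g} \omega_j^g F_{j,A}$ for all $g \in G(A,B)$ is equivalent to the statement that all the $F_{g,A}$ with $g \in G(A,B)$ coincide. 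This is the step that turns the ``one v.s. all-others'' mixture formulation into ordinary pairwise equality of distributions, and I would verify it by substituting one equation into another and using that the weights $\omega_j^g$ are nonnegative and sum to one.

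For the direction \eqref{test3} $\Rightarrow$ \eqref{test2}, I would fix an arbitrary pair $i \neq j$ with $\boldsymbol o_{ij} \neq \emptyset$ and set $A := \boldsymbol o_{ij}$, which lies in $\mathcal{A}$ since $|A| \leq p-1$ (at least one coordinate is missing in each pattern, so $\boldsymbol o_{ij}$ cannot be all of $\{1,\ldots,p\}$ — this edge case should be checked). Choosing $B = A^c$ ensures that the collapsed groups $G(A,B)$ refine to the original patterns, so that patterns $i$ and $j$ correspond to distinct labels in $G(A,B)$. The null of \eqref{test3} then forces, via the algebraic lemma, $F_{i,A} = F_{j,A}$, i.e. $F_{i,\boldsymbol o_{ij}} = F_{j,\boldsymbol o_{ij}}$, which is exactly the assertion of the null in \eqref{test2}. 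Since $i,j$ were arbitrary, this gives the full null of \eqref{test2}.

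For the converse \eqref{test2} $\Rightarrow$ \eqref{test3}, I would fix arbitrary $A \in \mathcal{A}$ and $B \in \mathcal{B}(A^c)$, and consider any two collapsed labels $g, g' \in G(A,B)$. Each collapsed label aggregates one or more original patterns; picking original patterns $i$ (in label $g$) and $i'$ (in label $g'$), the defining property of $\mathcal{N}_A$ guarantees $A \subseteq \boldsymbol o_i$ and $A \subseteq \boldsymbol o_{i'}$, hence $A \subseteq \boldsymbol o_{ii'}$. The null of \eqref{test2} gives $F_{i,\boldsymbol o_{ii'}} = F_{i',\boldsymbol o_{ii'}}$, and restricting (marginalizing) these equal distributions to the sub-coordinates $A$ yields $F_{i,A} = F_{i',A}$. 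Because this holds for every original pattern inside each collapsed label, all observations carrying labels $g$ and $g'$ share the same distribution on $A$, so $F_{g,A} = F_{g',A}$ for all $g,g' \in G(A,B)$. Invoking the algebraic lemma in the reverse direction recovers the mixture identities of \eqref{test3}.

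The main obstacle I anticipate is bookkeeping at the level of \emph{collapsed} versus \emph{original} groups: the distributions $F_{g,A}$ in \eqref{test3} are indexed by collapsed labels $g \in G(A,B)$ (themselves mixtures of original pattern distributions), whereas \eqref{test2} speaks of original patterns, so I must argue carefully that equality across original patterns propagates to equality across collapsed labels (and back). The cleanest way is to show that when all original $F_{i,A}$ agree on a common projection $A$, any mixture of them with nonnegative weights summing to one equals the same common distribution, so collapsing is distribution-preserving under the null. I would also take care to confirm that $\boldsymbol o_{ij} \neq \emptyset$ is handled consistently with the $H_A$ formulation of \eqref{test2} and that the weights $\omega_j^g$ being proportional to class frequencies do not affect the equivalence, since under the null every component of each mixture is identical.
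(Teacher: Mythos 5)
Your proposal is correct and follows essentially the same route as the paper's proof: the same key algebraic lemma (the one-vs-rest mixture system for fixed $A,B$ forces all $F_{g,A}$ to coincide, proved by substituting one equation into another using that the weights sum to one), the same choice $A=\boldsymbol o_{ij}$, $B=A^c$ for the direction \eqref{test3}$\Rightarrow$\eqref{test2}, and the same marginalization argument $A\subseteq\boldsymbol o_{ij}$ for the converse. Your extra bookkeeping about collapsed versus original labels is a slightly more careful treatment of a point the paper handles implicitly via its stated abuse of notation, but it is not a different argument.
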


Tackling hypothesis testing problem \eqref{test3} would be rather inefficient since we might test many times the same hypothesis when cycling through all $A \in \mathcal{A}$ and $B \in \mathcal{B}(A^c)$. However, the idea is that $A$ and $B$ will only be random draws from $\mathcal{A}$ and $\mathcal{B}(A^c)$. This is discussed in the next section.


\section{MCAR test Through Classification}\label{pracaspects}

In this section we introduce the classification-based statistic of our test and detail the implementation of our permutation approach, permuting the rows of the missingness matrix $\mathbf{M}$, to obtain a valid test.

\subsection{Test Statistic $U$}

Let us fix a projection $A \in \mathcal{A}$ and corresponding projection $B \in \mathcal{B}(A^c)$. 
We denote the induced collapsed class labels based on projections $A$ and $B$ by $Y^{(A,B)}$, by $X_A$ the projection of the random vector $X$ on $A$ and correspondingly by $x_A$ the projection on $A$ of observation $x$ in $\mathbf{X}_{\mathcal{N}_{A}, A}$. Furthermore, we define for each $g\in G(A,B)$ and $x$ in $\mathbf{X}_{\mathcal{N}_{A}, A}$ the following quantities:
\begin{align*}
   p^{(A,B)}_g(x) &:= P(Y^{(A,B)}=g \mid X_A=x_A), \\
   f^{(A,B)}_g(x) &:= P(x_A \mid Y^{(A,B)}=g),\\
   \pi^{(A,B)}_g & := P(Y^{(A,B)}=g).
\end{align*}
Let us fix $g\in G(A,B)$ as well. We reformulate the hypothesis testing problem (\ref{test3}):
\begin{align}
  H_{0,g}^{(A,B)}: &f^{(A,B)}_g = \frac{1}{1-\pi^{(A,B)}_g} \sum_{j\in\{1,\ldots, G(A,B)\}\setminus g} \pi^{(A,B)}_j f^{(A,B)}_j\notag\\
  &\text{v.s.} \label{test4}\\ 
H_{1,g}^{(A,B)}:  &f^{(A,B)}_g \neq \frac{1}{1-\pi^{(A,B)}_g} \sum_{j\in\{1,\ldots, G(A,B)\}\setminus g} \pi^{(A,B)}_j f^{(A,B)}_j.  \notag
\end{align}

Let $S_{f_g^{(A,B)}} \subset \mathcal{N}_A$ denote the indices of observations in $\mathbf{X}_{\mathcal{N}_{A}, A}$ that belong to class~$g$. For each missingness pattern~$g$, we now define the following statistic in analogy to \citet{Cal2020},
\begin{align}
    U^{(A,B)}_g := \frac{1}{|S_{f_g^{(A,B)}} |}\sum_{i\in S_{f_g^{(A,B)}}} \left(\log\frac{p^{(A,B)}_g(x_{i})}{1-p^{(A,B)}_g(x_{i})} - \log \frac{\pi^{(A,B)}_g}{1-\pi^{(A,B)}_g}\right).\label{Ustat}
\end{align} 

This statistic is motivated by the following claim:

\begin{restatable}{lemma}{densityratiothm}
\label{densityratiothmlabel}
The logarithm of the density ratio for testing (\ref{test4}) is given by $U^{(A,B)}_g$.
\end{restatable}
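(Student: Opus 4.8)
The plan is to recognize test \eqref{test4} as a two-sample comparison between the class-$g$ density $f_g^{(A,B)}$ and the ``all-others'' density $\tilde f_{-g}^{(A,B)} := \frac{1}{1-\pi_g^{(A,B)}}\sum_{j\neq g}\pi_j^{(A,B)} f_j^{(A,B)}$, and to show that the pointwise log density ratio $\log\bigl(f_g^{(A,B)}(x)/\tilde f_{-g}^{(A,B)}(x)\bigr)$ equals exactly the summand appearing in the definition \eqref{Ustat}. The statistic $U^{(A,B)}_g$ is then just the empirical average of this quantity over the observations drawn from class $g$, i.e. the Monte Carlo version of the population log density ratio.

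First I would verify that $\tilde f_{-g}^{(A,B)}$ is a genuine conditional density. Since $1-\pi_g^{(A,B)} = \sum_{j\neq g}\pi_j^{(A,B)} = P(Y^{(A,B)}\neq g)$, the weights $\pi_j^{(A,B)}/(1-\pi_g^{(A,B)})$ are nonnegative and sum to one, so $\tilde f_{-g}^{(A,B)}(x) = P(x_A \mid Y^{(A,B)}\neq g)$ is itself a density. This identifies the mixture on the right-hand side of \eqref{test4} (equivalently it matches the weights $\omega_j^g$ of \eqref{test3} with $\pi_j^{(A,B)}/(1-\pi_g^{(A,B)})$), so the density ratio relevant for the test is precisely $f_g^{(A,B)}/\tilde f_{-g}^{(A,B)}$.

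Next I would apply Bayes' theorem. Writing $f^{(A,B)}(x)=\sum_j \pi_j^{(A,B)} f_j^{(A,B)}(x)$ for the marginal density of $X_A$, one has
$$p_g^{(A,B)}(x) = \frac{\pi_g^{(A,B)} f_g^{(A,B)}(x)}{f^{(A,B)}(x)}, \qquad 1-p_g^{(A,B)}(x) = \frac{\sum_{j\neq g}\pi_j^{(A,B)} f_j^{(A,B)}(x)}{f^{(A,B)}(x)}.$$
Forming the odds ratio cancels the marginal $f^{(A,B)}(x)$ and, using the previous step, gives
$$\frac{p_g^{(A,B)}(x)}{1-p_g^{(A,B)}(x)} = \frac{\pi_g^{(A,B)} f_g^{(A,B)}(x)}{(1-\pi_g^{(A,B)})\,\tilde f_{-g}^{(A,B)}(x)} = \frac{\pi_g^{(A,B)}}{1-\pi_g^{(A,B)}}\cdot\frac{f_g^{(A,B)}(x)}{\tilde f_{-g}^{(A,B)}(x)}.$$
Taking logarithms and rearranging isolates the log density ratio as
$$\log\frac{f_g^{(A,B)}(x)}{\tilde f_{-g}^{(A,B)}(x)} = \log\frac{p_g^{(A,B)}(x)}{1-p_g^{(A,B)}(x)} - \log\frac{\pi_g^{(A,B)}}{1-\pi_g^{(A,B)}},$$
which is exactly the term inside the sum in \eqref{Ustat}. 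Averaging over the class-$g$ indices $S_{f_g^{(A,B)}}$ then produces $U^{(A,B)}_g$, establishing the claim.

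The computation is a short Bayes-rule manipulation, so the real work is conceptual rather than technical. The main point to get right is the correct identification of which two densities the test actually compares: in particular, that the normalizing factor $1/(1-\pi_g^{(A,B)})$ converts the weighted sum of the remaining densities into the bona fide conditional density $P(\,\cdot \mid Y^{(A,B)}\neq g)$. I would also make explicit that $U^{(A,B)}_g$ is the empirical estimate, over the sample from $f_g^{(A,B)}$, of the population quantity $\E_{X\sim f_g^{(A,B)}}\bigl[\log\bigl(f_g^{(A,B)}(X)/\tilde f_{-g}^{(A,B)}(X)\bigr)\bigr]$, which is the Kullback--Leibler divergence between $f_g^{(A,B)}$ and $\tilde f_{-g}^{(A,B)}$; this is what connects the lemma to the KL-based motivation of the test.
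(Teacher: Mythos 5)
Your proposal is correct and follows essentially the same route as the paper's proof: Bayes' rule for $p_g^{(A,B)}$, the algebraic cancellation yielding $\frac{p_g^{(A,B)}(x)}{1-p_g^{(A,B)}(x)} = \frac{\pi_g^{(A,B)}}{1-\pi_g^{(A,B)}}\cdot\frac{f_g^{(A,B)}(x)}{\tilde f_{-g}^{(A,B)}(x)}$, and taking logarithms to recover the summand of \eqref{Ustat}. Your added observations (that the normalized mixture is a bona fide conditional density, and that $U_g^{(A,B)}$ is the class-$g$ empirical average of the pointwise log ratio) are correct clarifications of points the paper leaves implicit, not a different argument.
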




The main motivation for the form of this test-statistic is that one can use the same arguments as in \citet[Proposition 1]{Cal2020} to show that a test based on $U^{(A,B)}_g$ will have the highest power among all tests for \eqref{test4}, according to the Neyman-Pearson Lemma. In addition, the test statistic converges to the Kullback-Leibler Divergence between $f_g^{(A,B)}$ and the mixture of the other densities, motivating the name of our MCAR test. A high value of KL-Divergence indicates that the distributions of two samples deviate strongly from each other.


\begin{restatable}{lemma}{KLlemmalabel}
\label{KLlemmalabel}
$U_g^{(A,B)}$ converges in probability to the Kullback-Leibler Divergence between $f_g^{(A,B)}$ and the mixture of the other densities:
\begin{align*}
    U_g^{(A,B)} \rightarrow \E_{f_g}\left[  \log \frac{f^{(A,B)}_g(X)(1-\pi^{(A,B)}_g) }{ \sum_{j\in G(A,B)\setminus g} \pi^{(A,B)}_j f^{(A,B)}_j(X)}\right],
\end{align*} as $n_g$ and $\sum_{j\in \{1,\ldots,G\}\setminus g} n_j\rightarrow \infty$ and $n_g/n \rightarrow \pi^{(A,B)}_g \in (0,1)$.
\end{restatable}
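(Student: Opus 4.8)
The plan is to reduce the claim to a law of large numbers applied to i.i.d.\ copies of the log density ratio identified in Lemma \ref{densityratiothmlabel}. First I would rewrite the summand of \eqref{Ustat} in terms of densities. By Bayes' theorem, $p^{(A,B)}_g(x)=\pi^{(A,B)}_g f^{(A,B)}_g(x)/f^{(A,B)}(x)$, where $f^{(A,B)}=\sum_{j}\pi^{(A,B)}_j f^{(A,B)}_j$ is the marginal density of $X_A$; hence $1-p^{(A,B)}_g(x)=\big(\sum_{j\neq g}\pi^{(A,B)}_j f^{(A,B)}_j(x)\big)/f^{(A,B)}(x)$. Substituting these into \eqref{Ustat} and cancelling $f^{(A,B)}(x)$ gives, for each observation,
\begin{align*}
\log\frac{p^{(A,B)}_g(x)}{1-p^{(A,B)}_g(x)}-\log\frac{\pi^{(A,B)}_g}{1-\pi^{(A,B)}_g}
=\log\frac{f^{(A,B)}_g(x)\,(1-\pi^{(A,B)}_g)}{\sum_{j\in G(A,B)\setminus g}\pi^{(A,B)}_j f^{(A,B)}_j(x)},
\end{align*}
which is precisely the log density ratio of Lemma \ref{densityratiothmlabel}. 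Thus $U^{(A,B)}_g$ is the empirical average, over the class-$g$ observations, of this fixed measurable function of $X_A$. Note that, as defined above \eqref{Ustat}, $p^{(A,B)}_g$ and $\pi^{(A,B)}_g$ denote the true population quantities, so no estimation error enters and this is a population-level statement.

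Next I would invoke the sampling structure: conditionally on the induced labels $Y^{(A,B)}$, the observations $\{x_{i}\}_{i\in S_{f^{(A,B)}_g}}$ are i.i.d.\ draws from the class-conditional density $f^{(A,B)}_g$, and $|S_{f^{(A,B)}_g}|=n_g$. Applying the weak law of large numbers to these i.i.d.\ summands then yields
\begin{align*}
U^{(A,B)}_g\;\xrightarrow{\;p\;}\;\E_{f^{(A,B)}_g}\!\left[\log\frac{f^{(A,B)}_g(X)\,(1-\pi^{(A,B)}_g)}{\sum_{j\in G(A,B)\setminus g}\pi^{(A,B)}_j f^{(A,B)}_j(X)}\right],
\end{align*}
and identifying this limit with the asserted Kullback--Leibler divergence between $f^{(A,B)}_g$ and the mixture of the remaining densities is immediate from the display in the first paragraph.

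Two points require care. First, for the law of large numbers to apply one needs the summand to be integrable under $f^{(A,B)}_g$; the negative part always has finite mean (since $(\log t)^+\le t$ gives $\E_{f_g}[(\log(\tilde f_g/f_g))^+]\le 1$), so integrability is equivalent to finiteness of the limiting KL divergence itself, which I would assume (and which the manipulation tacitly requires, being valid on the support where the densities are positive, i.e.\ where $f^{(A,B)}_g$ places its mass). Second, the number of summands $n_g$ is itself random. I would handle this by first conditioning on the labels, as above, and then removing the conditioning by a random-index argument: the hypotheses $n_g/n\to\pi^{(A,B)}_g\in(0,1)$ and $n\to\infty$ force $n_g\to\infty$, and since the strong law gives $\frac{1}{m}\sum_{i=1}^{m}Z_i\to\E[Z]$ along all deterministic $m$ for i.i.d.\ $Z_i$ with finite mean, composing with the almost-surely diverging index $n_g$ preserves the limit and delivers convergence in probability unconditionally.

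I expect the main obstacle to be exactly this last piece of bookkeeping: rigorously justifying the limit when averaging over a random, data-dependent number of i.i.d.\ terms, together with securing the integrability condition needed for the law of large numbers. The algebraic reduction in the first step is routine once Lemma \ref{densityratiothmlabel} is available, so the genuine content of the proof lies in the limit theorem and its application under a random sample size.
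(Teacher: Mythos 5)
Your proposal is correct and takes essentially the same route as the paper: rewrite the summand of \eqref{Ustat} as the log density ratio via Bayes' rule (i.e., Lemma \ref{densityratiothmlabel}), then apply the law of large numbers to the i.i.d.\ class-$g$ observations using $n_g/n \rightarrow \pi^{(A,B)}_g \in (0,1)$. The paper's proof is in fact terser than yours, so your added care about integrability of the summand and the randomness of the index $n_g$ only makes explicit what the paper leaves implicit.
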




Since the statistic $ U^{(A,B)}_g$ is evaluated only on cases $x \in S_{f_g^{(A,B)}}$, it holds that $ f^{(A,B)}_g(x)=  f^{*(A,B)}_g(x)$ and $p^{(A,B)}_g(x) = p^{*(A,B)}_g(x)$. This means that the projected complete and incomplete distributions coincide on the projected complete samples. Thus we are indeed asymptotically measuring the Kullback-Leibler Divergence between $f^{*(A,B)}_g$ and the mixture of the other densities.

Since there might be only very few observations for a single class $g$, we symmetrize the KL-Divergence. That is, we use the samples of all classes to evaluate the KL-Divergence and not only the samples of class $g$. Let $S_{f_{g^{c(A,B)}}} \subset \mathcal{N}_{A}$ denote the indices of observations in $\mathbf{X}_{\mathcal{N}_{A}, A}$ that belong to all other classes $G(A, B)\setminus g$. For each missingness pattern $g$, we will use, in the following, the difference between two of the above statistics, namely
\begin{align}\label{symm}
\begin{split}
    U^{(A,B)}_g -  U^{(A,B)}_{g^c} &= \frac{1}{| S_{f_g^{(A,B)}} |}\sum_{i\in S_{f_g^{(A,B)}}} \log\frac{p^{(A,B)}_g(x_{i})}{1-p^{(A,B)}_g(x_{i})} \\
    &- \frac{1}{| S_{f_{g^{c(A,B)}}} |}\sum_{i\in  S_{f_{g^{c(A,B)}}}} \log\frac{p^{(A,B)}_g(x_{i})}{1-p^{(A,B)}_g(x_{i})},
    \end{split}
\end{align} where the terms including the class probabilities $\pi_g^{(A,B)}$ cancel out. This difference converges to the symmetrized KL-Divergence between the mixture of $f^{(A,B)}_g$ and the remaining classes and is more sample efficient than only using $ U^{(A,B)}_g$. The test statistic for fixed $(A,B)$ is then given by
\begin{align*}
   U^{(A,B)} := \sum_{g=1}^{G(A,B)} (U^{(A,B)}_g - U^{(A,B)}_{g^c}),
\end{align*}
and the final test statistic is defined as
\begin{align}
   U := \E_{A \sim \kappa, B\sim \kappa(A^c)} [U^{(A,B)}].\label{testtatU}
\end{align}

\subsection{Practical Estimation of $U$}

We estimate $p_g^{(A,B)}$ with a multiclass-classifier, yielding $\hat{p}_g^{(A,B)}$. Plugging-in this quantity into \eqref{symm} yields $\hat{U}_g^{(A,B)}-\hat{U}_{g^c}^{(A,B)}$. We then estimate $U^{(A,B)}$ by
$$
 \hat{U}^{(A,B)} := \sum_{g=1}^{G(A,B)} (\hat{U}^{(A,B)}_g - \hat{U}^{(A,B)}_{g^c}).
$$

Finally, we estimate  $U$ by
\begin{equation}
    \hat{U}:=  \frac{1}{N}\sum_{i=1}^{N} \hat{U}^{(A_i,B_i)}, \label{finalstat}
\end{equation}
where $N$ is the number of draws of pairs of projections $(A_i,B_i)$, $i=1,\ldots,N$, with $A \in \mathcal{A}$ according to a distribution $\kappa$ and $B \in \mathcal{B}(A^c)$ according to a distribution $\kappa(A^c)$.

Our chosen multiclass classifier is Random Forest \citep{Breiman, Breiman2001}, more specifically, the probability forest of \citet{probabilityforests}. That is, for each of the $N$ projections, we fit a Random Forest with a specified number of trees, a parameter called $\texttt{num.trees.per.proj}$.
Thus, for each tree (or group of trees) a random subset of variables and labels is chosen based on which the test statistic is computed. In each tree, we set $\texttt{mtry}$ to the full dimension of the projection to not have an additional subsampling effect. This approach aligns naturally with the construction of Random Forest, as the overall approach might be seen as one aggregated Random Forest, which restricts the variables in each tree or group of trees to a random subset of variables. We finally use the OOB-samples for predicting $\hat{p}^{(A,B)}_g$.


The question remains how to sample the sets $(A_1, B_1), \ldots (A_N, B_N)$ at random. Our chosen approach is quite simple: we first randomly sample a number of dimensions $r_1$ by drawing uniformly from $\{1,\ldots,p-1 \}$. We then draw $r_1$ values without replacement from $\{1,\ldots,p \}$ to obtain $A$. Similarly, we randomly draw a value $r_2$ from $\{1,\ldots,p - r_1 \}$ and then draw $r_2$ values without replacement from $\{1,\ldots,p \}\setminus A$ to obtain $B$. We then consider $\mathbf{M}_{\mathcal{N}_{A}, B}$, i.e., all patterns for the fully observed observations in $A$ projected to $B$, and build the labels $Y^{(A,B)}$ based on the patterns in this matrix. This simple approach is used as a default, but one could also employ a more data-adaptive subsampling. In our algorithm, we might restrict the number of collapsed classes by selecting $B$ corresponding to $A$ accordingly. The parameter indicating the maximal number of collapsed classes allowed is given by \texttt{size.resp.set}. If set to $2$, we reduce the multi-class problem to a two-class problem. In Algorithm \ref{algo1} we provide the pseudo-code for the estimation of $\hat{U}^{(A,B)}$.

\begin{algorithm}[H]
\normalsize


\textbf{Inputs}:  incomplete dataset $\mathbf{X}$, missingness indicator $\mathbf{M}$, projections $A$, $B$ \\
\textbf{Result}: $\hat{U}^{(A,B)}$\\
\textbf{Hyper-parameters}: number of trees per projection \texttt{num.trees.per.proj}, standard parameters of the Probability Forests \texttt{size.resp.set}\;
 - Recover the complete cases $\mathcal{N}_{A}$ with respect to $A$\;
 - Generate the $G(A,B)$ collapsed class labels $Y^{(A,B)}$ from $\mathbf{M}_{\mathcal{N}_{A}, B}$\;
 
 - Fit a multi-class probability forest with $\texttt{num.trees.per.proj}$ trees and $\texttt{mtry}$ full\;
  
  \For{$g=1,\ldots, G(A,B)$}{
  - Estimate $\hat{p}_g^{(A,B)}$ with the fitted forest above using out-of-bag probabilities\;
 - Return the log-likelihood contribution $\hat{U}^{(A,B)}_g-\hat{U}^{(A,B)}_{g^c}$ for class $g$\;
 }
  - Average the log-likelihood ratio contributions $\hat{U}^{(A,B)}_g-\hat{U}^{(A,B)}_{g^c}$ from the $G(A,B)$ collapsed classes $g$ to get the statistic $\hat{U}^{(A,B)}$\;
 
 \caption{$\text{Uhat}(\mathbf{X}, \mathbf{M}, A, B)$}
 \label{algo1}
\end{algorithm}

To ensure that the level is kept by a test based on the statistic $\hat{U}$ for any choice of $\kappa$ and $\kappa(A^c)$, we use a permutation approach, as detailed next.

\subsection{Permutation Test}

To ensure the correct level, we follow a permutation approach. Informally speaking, the permutation approach works in this context if the testing procedure can be replicated in exactly the same way on the randomly permuted class labels. This is not completely trivial in this case, as the labels are defined in each projection via the missingness matrix $\mathbf{M}$. It can be shown numerically that permuting the labels at the level of the projection does not conserve the level, as this is blind to the correspondence between the projections across the permutations.

The key to the correct permutation approach is to permute the rows of $\mathbf{M}$. That is, for $L$ permutations $\sigma_{\ell}$, $\ell=1,\ldots,L$, we obtain $L$ matrices $\mathbf{M}_{\sigma_{1}}, \ldots, \mathbf{M}_{\sigma_{L}}$ with only the rows permuted. Then we proceed as above: We sample $A \sim \kappa$, $B \sim \kappa(A^c)$ and for each permutation of rows $\sigma_{\ell}$, $\ell=1,\ldots,L$, we calculate $U_{g, \sigma_{\ell}}^{(A,B)} - U_{g^c, \sigma_{\ell}}^{(A,B)}$ as in \eqref{symm}. Using $\hat{p}^{(A,B)}_g$ instead of $p^{(A,B)}_g$ this results in $\hat{U}_{g, \sigma_{\ell}}^{(A,B)}$ and in the statistic
\begin{align*}
    \hat{U}^{(A,B)}_{\sigma_{\ell}} := \sum_{g=1}^{G(A,B)} \hat{U}^{(A,B)}_{g, \sigma_{\ell}}-\hat{U}^{(A,B)}_{g^c, \sigma_{\ell}}.
\end{align*}
We note that we do not need to refit the forest for this permutation approach to work. Instead, we can directly use $\hat{p}^{(A,B)}_g$ from the original Random Forest that we fitted on the original $\mathbf{M}$.

Finally, we calculate the empirical distribution of the test-statistic under the null, by calculating for $\ell=1,\ldots, L$,
\begin{equation}\label{finalestimatepermuted}
    \hat{U}_{\sigma_{\ell}}:=  \frac{1}{N}\sum_{j=1}^{N} \hat{U}^{(A_j,B_j)}_{\sigma_{\ell}}.
\end{equation}
The $p$-value of the test is then obtained as usual by
\begin{align}\label{p-val}
    Z:=\frac{\sum_{\ell=1}^L \Ind\{ \hat{U}_{\sigma_{\ell}} \geq \hat{U} \} +1 }{L+1}.
\end{align}



Then it follows from standard theory on permutation tests that $Z$ is a valid $p$-value:

\begin{restatable}{proposition}{pvalprop}
\label{pvallabel}
Under $H_0$ in \eqref{test1}, and $Z$ as defined in \eqref{p-val}, it holds for all $z \in [0,1]$ that
\begin{equation}\label{validpval}
    \Prob(Z \leq z) \leq z.
\end{equation}
\end{restatable}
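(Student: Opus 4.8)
\emph{Proof strategy.} The plan is to cast the result as an instance of the standard validity of (Monte Carlo) permutation tests, so that almost all of the work reduces to verifying a single distributional-invariance (exchangeability) condition under $H_0$ of \eqref{test1}. Concretely, I would regard the whole pipeline producing $\hat U$ in \eqref{finalstat} — the draws of projections $(A_j,B_j)$, the construction of the collapsed labels $Y^{(A,B)}$ from $\mathbf M$, the probability-forest fit, and the aggregation \eqref{symm} — as a single measurable map $T$ applied to the data $(\mathbf X^*,\mathbf M)$ together with an auxiliary randomization variable $W$ (encoding the projection draws and all forest-internal randomness) drawn independently of $(\mathbf X^*,\mathbf M)$. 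The permuted statistics $\hat U_{\sigma_\ell}$ in \eqref{finalestimatepermuted} are then $T$ evaluated at $(\mathbf X^*,\mathbf M_{\sigma_\ell},W)$, i.e.\ the same map with the rows of $\mathbf M$ permuted by $\sigma_\ell$ while $\mathbf X^*$ and $W$ are held fixed.

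The first and most important step is to show that the finite family $(\hat U,\hat U_{\sigma_1},\dots,\hat U_{\sigma_L})$ is exchangeable under $H_0$. This is exactly where the strong null \eqref{test1} (rather than the weaker \eqref{test2}/\eqref{test3}) is needed: the equality $F^*_1=\cdots=F^*_G$ says that the law of $X^*$ given the pattern group does not depend on the group, i.e.\ $X^*\independent M$, which is the MCAR condition. Since the rows $(X^*_i,M_i)$ are i.i.d.\ and within each row $X^*_i\independent M_i$, the matrices factor as $\mathbf X^*\independent\mathbf M$ with $\mathbf M$ having i.i.d., hence exchangeable, rows. Consequently, for every fixed permutation $\sigma$, permuting the rows of $\mathbf M$ leaves the joint law of $(\mathbf X^*,\mathbf M)$ unchanged, and because $W$ is independent of the data, $(\mathbf X^*,\mathbf M,W)\stackrel{d}{=}(\mathbf X^*,\mathbf M_{\sigma},W)$. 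Drawing $\sigma_1,\dots,\sigma_L$ i.i.d.\ uniformly and identifying the original statistic with the identity permutation then yields exchangeability of the $L+1$ values by the usual symmetrization argument for randomization tests.

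Granting exchangeability, the final step is routine: by the standard rank argument for the $+1$-corrected permutation $p$-value \citep{lehmann2005testing}, the number of the $L+1$ exchangeable statistics that are at least as large as $\hat U$, divided by $L+1$, is super-uniform, so that $Z$ in \eqref{p-val} satisfies $\Prob(Z\le z)\le z$ for all $z\in[0,1]$; the explicit $+1$ in numerator and denominator is precisely what guarantees validity for any finite $L$ and in the presence of ties.

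I expect the genuine obstacle to be the first step, specifically checking that the permuted statistics are produced by the \emph{same} map as the original one, so that exchangeability is not broken by data-dependent intermediate quantities. The delicate points are (i) that the collapsed labels $Y^{(A,B)}$ and the index sets $\mathcal N_A$, $S_{f_g^{(A,B)}}$ are themselves functions of $\mathbf M$, so I must verify that they transform consistently when the rows of $\mathbf M$ are permuted, and (ii) the reuse of the estimates $\hat p_g^{(A,B)}$ fitted on the original $\mathbf M$ across all permutations (``no refit''). For (ii) I would argue that folding the forest fit and its predictions into the invariance structure keeps the same measurable map in force for every $\sigma_\ell$; equivalently, one may view the per-projection statistic \eqref{symm} as a two-sample score comparison in which permuting the rows of $\mathbf M$ only reshuffles the group membership of the fixed scores, so that the invariance $(\mathbf X^*,\mathbf M,W)\stackrel{d}{=}(\mathbf X^*,\mathbf M_\sigma,W)$ carries through the entire computation of \eqref{finalstat}. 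Once this bookkeeping is done, the proposition follows.
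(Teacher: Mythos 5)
Your proposal is correct and follows essentially the same route as the paper's own proof: you reduce everything to the distributional invariance $(\mathbf{X}^*, \mathbf{M}, W) \stackrel{d}{=} (\mathbf{X}^*, \mathbf{M}_{\sigma}, W)$ under MCAR (independence of $\mathbf{X}^*$ and $\mathbf{M}$ plus i.i.d.\ rows, with the projection and forest randomness independent of $\mathbf{M}$), note that the original and permuted statistics are the same measurable map applied to the original and row-permuted inputs, and then invoke standard permutation-test theory for the $+1$-corrected $p$-value with random permutations. The paper does exactly this, merely phrasing the last step as a citation to \citet{Hemerik2018} conditionally on the projections $(\mathbf{A},\mathbf{B})$ and then integrating, where you instead spell out the exchangeability-plus-rank argument; your explicit attention to the fact that $\hat{U}$ depends on the data only through fully observed entries (hence is a function of $(\mathbf{X}^*,\mathbf{M})$) matches the key observation in the paper's proof.
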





\begin{algorithm}[htb]

\normalsize
\textbf{Inputs}: incomplete dataset $\mathbf{X}$ \\
\textbf{Result}: $p$-value\\
\textbf{Hyper-parameters}: number of pairs of projections $N$, number of permutations $L$, number of trees per projection \texttt{num.trees.per.proj}, standard parameters of the Probability Forests, maximal number of collapsed classes \texttt{size.resp.set}\;
 - Randomly permute the rows of $\mathbf{M}$ $L$ times to obtain $\mathbf{M}_{\sigma_1}, \ldots, \mathbf{M}_{\sigma_L}$\;
 \For{$j=1,\ldots,N$}{
  - Sample a pair of projections $(A_j,B_j)$ hierarchically according to $A_j \sim \kappa$ and $B_j \sim \kappa(A_j)$\; 
 - Calculate $\hat{U}^{(A_j,B_j)}=\text{Uhat}(\mathbf{X}, \mathbf{M}, A_j, B_j)$\;
 
 \For{$\ell=1,\ldots,L$}{
 
 - Calculate $\hat{U}_{\sigma_{\ell}}^{(A_j,B_j)}=\text{Uhat}(\mathbf{X}, \mathbf{M}_{\sigma_{\ell}}, A_j, B_j)$\;
 }

 }
 - Average the statistics $\hat{U}^{(A_j,B_j)}$, $\hat{U}_{\sigma_{\ell}}^{(A_j,B_j)}$ over the couples of projections $(A_j,B_j)$ to get the final statistic $\hat{U}$, $\hat{U}_{\sigma_{\ell}}$, $\ell=1,\ldots,L$ \;
 - Obtain the $p$-value with \eqref{p-val}\;
 \caption{PKLMtest($\mathbf{X}$)}
 \label{Implementationdetails}
\end{algorithm}

Algorithm \ref{Implementationdetails} summarizes the testing procedure.



\section{Empirical Validation}\label{empresults_1}

In this section, we empirically showcase the power of our test in comparison to recent competitors on both simulated and real data. The simulation setting is set up along the lines of \citet{Jamshidian2010} and \citet{Li2015} with a common MAR mechanism. For the real datasets we also add a random MAR generation through the function \texttt{ampute} of the \textsf{R}-package \texttt{mice}, see e.g., \citet{ amputepaper}. 

As we did throughout the paper, we refer to our test as ``PKLM'', the test of \citet{Li2015} as ``Q'', the test of \citet{littletest} as ``Little'' and finally the one of \citet{Jamshidian2010} as ``JJ''. The Little-test is computed with the \textsf{R}-package \texttt{naniar} \citep{naniar}, while the JJ-test uses the code of the  \textsf{R}-package \texttt{MissMech} \citep{missmech}. Finally, the code for the Q-test was kindly provided to us by the authors.

\subsection{Simulated Data}\label{sec: simulations}
We vary the sample size $n$, the number of dimensions $p$, and the number of complete observations, which we denote by $r$. Cases $1-8$ describe the following different data distributions, similarly as in \cite{Li2015} and in \cite{Jamshidian2010}: Throughout, $I_p$ is a covariance matrix with diagonal elements $1$ and off-diagonal elements $0$ while $\Sigma$ is a covariance matrix with diagonal elements $1$ and off-diagonal elements $0.7$:
\begin{enumerate}
    \item A standard multivariate normal distribution with mean $0$ and covariance $I_p$,
    \item a correlated multivariate normal distribution with mean $0$ and covariance $\Sigma$,
    \item a multivariate $t$-distribution with mean $0$, covariance $I_p$ and degree of freedom $4$,
    \item a correlated multivariate $t$-distribution with mean $0$, covariance $\Sigma$ and degree of freedom $4$,
    \item a multivariate uniform distribution which has independent uniform$(0, 1)$ marginal distributions,
    \item a correlated multivariate uniform distribution obtained by multiplying $\Sigma^{1/2}$ to the multivariate uniform distribution in 5,
    \item a multivariate distribution obtained by generating 
    $W = Z + 0.1Z^3$, where $Z$ is from the standard multivariate normal distribution,
    \item a multivariate Weibull distribution which has independent Weibull marginal distribution, and each Weibull marginal distribution has scale parameter $1$ and shape parameter $2$.
\end{enumerate}

The above implements the fully observed $\mathbf{X}^*$. To compute the type-I error, we then simulate the MCAR mechanism where each value in the $p$ columns of the missingness matrix $\mathbf{M}$ has a probability of $1-r^{1/p}$ being one and is otherwise zero. To compute the power, we simulate the MAR mechanism following the description in \citet{Li2015}: We generate $\mathbf{M}$ such that the first column consists only of zeros so that the first variable is fully observed. Further, each value in the remaining $p-1$ columns has a probability of $1-r^{1/(p-1)}$ being one, while the rest is zero. This results, on average, in $r$ rows in $\mathbf{M}$ with only zeros, and thus in $r$ fully observed rows in $\mathbf{X}$. Next, we sort the rows of $\mathbf{M}$ into two groups, those that will be fully observed (complete group) and those that will have at least one missing value (missing group). So far, the generation is still MCAR. However now, for each row $i=1,\ldots,n$ we compare $\mathbf{X}^*_{i, 1}$ with the mean of $\mathbf{X}^*_{\bullet, 1}$, denoted by $\bar{X}_{1}$. If $\mathbf{X}^*_{i, 1} < \bar{X}_{1}$, the corresponding row $i$ is placed into the complete group with probability 1/6, and with probability 5/6 into the missing group. That is, with probability 1/6, the row $i$ is paired with a row in $\mathbf{M}$ from the complete group, and with probability 5/6, it is paired with a row from the missing group. Thus, in this case it is 5 times more likely that the row is placed in the missing group. On the other hand, if $\mathbf{X}^*_{i, 1} \geq \bar{X}_{1}$ the situation reverses, and row $i$ is 5 times more likely to be associated with a row in $\mathbf{M}$ from the complete group. Assigning the rows of $\mathbf{X}^*$ successively to the rows of $\mathbf{M}$ like this results in $\mathbf{X}$ with MAR missingness.

Each experiment was rerun $\texttt{nsim}=300$ times to compute type-I error and power. We used the following default hyperparameter setting for the computation of our PKLM-test: number of permutations $\texttt{nrep} = 30$, number of projections $\texttt{num.proj} = 100$, minimal node size in a tree $\texttt{min.node.size} = 10$, number of fitted trees per projection $\texttt{num.trees.per.proj} = 200$ and maximal number of collapsed classes allowed in a projection $\texttt{size.resp.set} = 2$. We note that the choice of these hyperparameters is intriguingly simple: besides $\texttt{size.resp.set}$, it holds that ``higher values are better''. Thus, as with RF in general, it is mostly a question of computational resources determining how large the values can be chosen. This is especially true for the number of trees for each forest, which should be relatively high in order to minimize additional randomness. We found $\texttt{num.trees.per.proj} = 200$ to be a good compromise between speed and accuracy. As the level is guaranteed for any number of permutations, and we desired a choice of hyperparameters that would work for $p=4$ as well as $p=40$, we chose the number of permutations low ($\texttt{nrep}=30$), but the number of projections relatively high ($\texttt{num.proj} = 100$). The only ``difficult'' parameter to set is $\texttt{size.resp.set}$, as there appears to be some loss in accuracy when the number of classes is larger than two. We thus found that $\texttt{size.resp.set}=2$, generating two classes, works well in a wide range of examples.

As mentioned throughout the paper, the Q-test could not be calculated for a large range of settings.\footnote{The largest number $p$ reported in the paper of \cite{Li2015} is $10$, while $r$ is at least $0.35$.} In particular, computation times were infeasible for the setting $p=10$ and $r=0.1$, and for any configuration with $p = 20$ or $p=40$. For the setting $n= 500$, $p=10$ and $r=0.1$ for instance, one test for case $2$ took around 20 minutes to finish, implying an approximate overall computation time of $500 \cdot 8 \cdot 2 \cdot 20 = 16000$ minutes or approximately 110 24-hour days. This despite the fact that the \textsf{R}-code of the Q-test we received was well implemented. In the upcoming Tables \ref{tab3} and \ref{tab4} of results we always used the nominal level of $\alpha = 0.05$. We boldfaced the results for each row in the tables in the following manner: Whenever the type $I$ error of a test is below or equal to $0.05$ and the test has the best power, it will be boldfaced. If this is true for more than one test, they are all boldfaced. Additionally, we boldfaced all the type-$I$ errors that are below or equal to the nominal level $\alpha = 0.05$ to indicate which tests holds the level on average in the given settings. 

In the simulation set-up of $n=200$ and $p=4$, the Q-test is very powerful, while keeping the nominal level. The PKLM-test is rarely the most powerful here, however the power of the PKLM-test is often relatively close to the best power. As an example, in case $2$ for $r=0.65$, the Q-test has a power of $1$ while the PKLM-test has a power $0.93$, with both keeping the nominal level $\alpha=0.05$. 


In the set-up of $n=500$ and $p=10$, the overall picture changes. The PKLM-test is in all but two of the $24$ cases the most powerful test, sometimes leaving the second-best test quite far behind. As an example, in case $3$ for $r=0.65$, the PKLM-test has a power of $0.85$ while the Q- and the Little-test exhibit a power of $0.26$ and $0.61$, respectively. While the Little- and the JJ-test often show inflated levels, this is never a problem for the valid PKLM-test. 

In the simulation set-up of $n=500$ and $p=20$, it appears as if the Little-test is a strong competitor. But this is only until one considers its type-I error. Though to a much lesser degree than the JJ-test, the type-I error is often heavily larger than the nominal level. Considering for instance case $4$, the power of the Little-test is even slightly less than its actual type-I error for $r=0.1$. In case $4$ with $r=0.35$, our test displays a power of $0.89$ and keeps the level, while the Little-test only has a power of $0.33$ despite having a grossly inflated type-I error. All of these problems are worsened for the JJ-test, which often displays an inflated type-I error in almost all cases and simulation set-ups. A similar story plays out in the case $r=0.65$.

Finally, in the simulation set-up of $n=1000$ and $p=40$, the power of our test is again much better than that of all other tests. Interestingly, the PKLM-test tends to have higher power when the components of the distribution are not independent, such as in the cases 2, 4, 6, and 8. For example, in case $1$ for $r=0.65$, PKLM has a power of 0.2, while for case $2$ it has a power of $0.95$. The main difference between these two cases is the strong positive correlation induced in case $2$. This pattern repeats: in all correlated examples and for both $r=0.65$ and $r=0.35$, the PKLM has a power nearing $1$, whereas in the independent versions, the power is closer to the type-I error. Thus, our test is able to use the dependencies in the data to its advantage, at least for $r=0.65$ and $r=0.35$, and can reach a very high power even for comparatively large $p$.

In summary, our test is very competitive even in small dimensions, where the Q-test is very powerful. It leaves behind all other tests by a wide margin as soon as one increases $p$. The Q-test remains strong in these situations as well, but becomes quickly infeasible as either $p$ increases or the fraction of complete cases $r$ decreases. Crucially, only the PKLM-test and the Q-test are able to consistently keep the nominal level over all experiments, with the Little- and JJ-test showing blatant inflation of the type-I error in many situations. This is the case despite the fact that simply checking the type-I error for a single level $\alpha$ ($0.05$ in this case) is far from sufficient to analyse the validity of a $p$-value. 
\begin{figure}[htb]
\centering
\includegraphics[width=11cm]{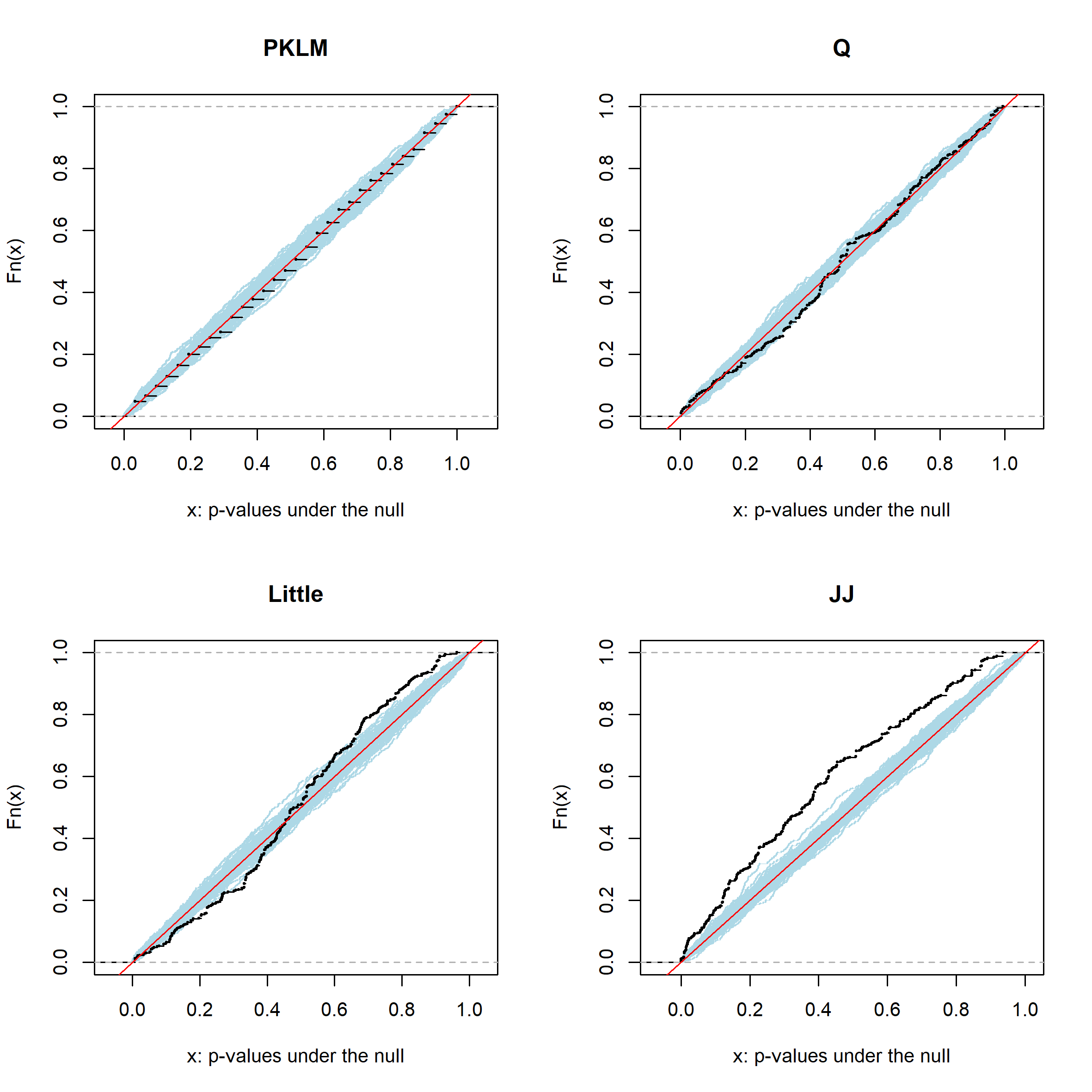}
\caption{Example plot of cumulative distribution function values of the $p$-values under the null (MCAR) of the four different tests. The simulation set up is $n=500$, $p=10$, $r=0.65$ in case $5$, with $500$ repetitions. The red line is the $x=y$ line, while the blue lines show $100$ ecdfs of $500$ simulated uniform random variables.}
\label{fig:ecdf}
\end{figure}

As an illustration, we randomly chose one of the above experiments in which the Little-test kept the nominal level, e.g., in the simulation set up $n=500$, $p=10$, $r=0.65$ in case $5$. In Figure \ref{fig:ecdf} we plot the empirical cumulative distribution functions (ecdf) of $500$ $p$-values under the null (MCAR) of the four different tests. The red line is the $x=y$ line. In blue we plotted $100$ ecdfs of a uniform$(0,1)$-distribution. As described in Equation \eqref{validpval}, a valid $p$-value has the property that the corresponding black ecdf values do not lie above the region defined by the blue lines. As Proposition \ref{pvallabel} predicts, this is clearly the case for the PKLM-test. That the $p$-values appear rather discrete stems from the fact that we chose a low number of permutations  ($\texttt{nrep}=30$). The Q-test is sometimes overshooting the red line, though this appears to mostly stem from estimation error. In general, it is remarkable how closely the ecdfs of $p$-values from both the Q- and PKLM-test resemble the ecdf of a uniform sample. The JJ-test appears to consistently have $P(Z \leq z) \geq z$. The Little-test finally appears to produce a valid $p$-value as long as only values $z < 0.5$ are considered. For $z\geq 0.5$, the the ecdf clearly violates the requirement of a valid $p$-value. If there is no theoretical guarantee, it is thus important to not just check the type-I error at $\alpha = 0.05$, but to instead consider other levels, e.g., $\alpha=0.1$.

\begin{table}
\centering
\begin{center}
\begin{adjustbox}{totalheight=\textheight}
\begin{tabular}{rrrl|rrrr|rrrr}
  \hline
  & & & &\multicolumn{4}{c}{Power} & \multicolumn{4}{c}{Type-I Error}\\
n & p & r & case & PKLM & Q & Little & JJ & PKLM & Q & Little & JJ \\ 
\hline
200 & 4 & 0.65 & 1 & 0.73 & \textbf{0.98} & 0.98 & 0.12 & \textbf{0.03} & \textbf{0.03} & 0.06 & \textbf{0.04} \\ 
   &  &  & 2 & \textbf{0.93} & 1.00 & 0.96 & 0.04 & \textbf{0.03} & 0.06 & 0.06 & \textbf{0.05} \\ 
   &  &  & 3 & 0.81 & \textbf{0.94} & 0.92 & 0.05 & \textbf{0.03} & \textbf{0.02} & \textbf{0.04} & 0.08 \\ 
   &  &  & 4 & 0.89 & \textbf{0.97} & 0.91 & 0.05 & \textbf{0.01} & \textbf{0.03} & \textbf{0.05} & \textbf{0.05} \\ 
   &  &  & 5 & 0.79 & \textbf{1.00} & \textbf{1.00} & 0.19 & \textbf{0.03} & \textbf{0.04} & \textbf{0.04} & 0.06 \\ 
   &  &  & 6 & 0.90 & \textbf{1.00} & 0.99 & 0.20 & \textbf{0.03} & \textbf{0.04} & \textbf{0.03} & 0.13 \\ 
   &  &  & 7 & \textbf{0.80} & 0.93 & 0.95 & 0.04 & \textbf{0.04} & 0.06 & 0.09 & 0.08 \\ 
   &  &  & 8 & 0.72 & \textbf{0.92} & 0.90 & 0.26 & \textbf{0.03} & \textbf{0.05} & \textbf{0.04} & 0.08 \\ 
   \hline
  200 & 4 & 0.35 & 1 & 0.79 & \textbf{0.98} & 0.97 & 0.04 & \textbf{0.03} & \textbf{0.04} & \textbf{0.04} & 0.13 \\ 
   &  &  & 2 & 0.87 & \textbf{0.98} & 0.97 & 0.08 & \textbf{0.03} & \textbf{0.03} & \textbf{0.03} & 0.08 \\ 
   &  &  & 3 & 0.82 & \textbf{0.97} & 0.90 & 0.16 & \textbf{0.03} & \textbf{0.03} & 0.06 & 0.12 \\ 
   &  &  & 4 & 0.87 & \textbf{0.99} & 0.92 & 0.10 & \textbf{0.03} & \textbf{0.02} & 0.08 & 0.11 \\ 
   &  &  & 5 & 0.79 & \textbf{0.99} & \textbf{0.99} & 0.10 & \textbf{0.04} & \textbf{0.05} & \textbf{0.05} & 0.08 \\ 
   &  &  & 6 & 0.80 & \textbf{1.00} & 0.97 & 0.12 & \textbf{0.03} & \textbf{0.04} & 0.06 & 0.11 \\ 
   &  &  & 7 & 0.79 & \textbf{0.98} & 0.92 & 0.09 & \textbf{0.03} & \textbf{0.05} & 0.07 & 0.06 \\ 
   &  &  & 8 & 0.83 & \textbf{0.99} & 0.99 & 0.10 & \textbf{0.05} & \textbf{0.05} & 0.06 & \textbf{0.05} \\ 
   \hline
  200 & 4 & 0.10 & 1 & 0.30 & \textbf{0.40} & 0.26 & 0.20 & 0.06 & \textbf{0.03} & \textbf{0.05} & 0.22 \\ 
   &  &  & 2 & \textbf{0.35} & 0.50 & 0.27 & 0.12 & \textbf{0.03} & 0.10 & \textbf{0.05} & 0.18 \\ 
   &  &  & 3 & 0.25 & \textbf{0.29} & 0.18 & 0.21 & \textbf{0.04} & \textbf{0.01} & \textbf{0.04} & 0.24 \\ 
   &  &  & 4 & 0.37 & \textbf{0.42} & 0.17 & 0.19 & \textbf{0.03} & \textbf{0.03} & \textbf{0.03} & 0.17 \\ 
   &  &  & 5 & 0.27 & \textbf{0.51} & 0.33 & 0.26 & \textbf{0.05} & \textbf{0.02} & \textbf{0.05} & 0.20 \\ 
   &  &  & 6 & 0.31 & \textbf{0.40} & 0.27 & 0.24 & \textbf{0.03} & \textbf{0.03} & \textbf{0.04} & 0.17 \\ 
   &  &  & 7 & 0.26 & \textbf{0.42} & 0.22 & 0.20 & \textbf{0.04} & \textbf{0.04} & 0.09 & 0.31 \\ 
   &  &  & 8 & 0.31 & \textbf{0.39} & 0.32 & 0.23 & \textbf{0.03} & \textbf{0.03} & \textbf{0.04} & 0.18 \\ 
   \hline
  500 & 10 & 0.65 & 1 & \textbf{0.93} & 0.89 & 0.88 & 0.09 & \textbf{0.05} & 0.06 & 0.06 & \textbf{0.05} \\ 
   &  &  & 2 & \textbf{0.99} & 1.00 & 0.84 & 0.08 & \textbf{0.02} & 0.06 & \textbf{0.05} & \textbf{0.05} \\ 
   &  &  & 3 & \textbf{0.85} & 0.26 & 0.61 & 0.12 & \textbf{0.02} & \textbf{0.05} & 0.18 & 0.10 \\ 
   &  &  & 4 & \textbf{0.99} & 0.96 & 0.60 & 0.10 & \textbf{0.04} & 0.06 & 0.19 & 0.12 \\ 
   &  &  & 5 & 0.89 & \textbf{0.98} & 0.96 & 0.16 & \textbf{0.04} & \textbf{0.05} & \textbf{0.03} & 0.10 \\ 
   &  &  & 6 & \textbf{0.99} & 1.00 & 0.91 & 0.15 & \textbf{0.04} & 0.07 & \textbf{0.02} & 0.13 \\ 
   &  &  & 7 & \textbf{0.90} & 0.61 & 0.68 & 0.09 & \textbf{0.02} & 0.07 & 0.12 & 0.07 \\ 
   &  &  & 8 & \textbf{0.79} & 0.76 & 0.76 & 0.18 & \textbf{0.03} & \textbf{0.04} & \textbf{0.05} & 0.09 \\ 
   \hline
  500 & 10 & 0.35 & 1 & \textbf{0.89} & 0.74 & 0.66 & 0.07 & \textbf{0.02} & \textbf{0.02} & \textbf{0.02} & 0.08 \\ 
   &  &  & 2 & \textbf{0.99} & 0.99 & 0.69 & 0.09 & \textbf{0.03} & 0.06 & \textbf{0.03} & 0.11 \\ 
   &  &  & 3 & \textbf{0.88} & 0.33 & 0.51 & 0.14 & \textbf{0.04} & \textbf{0.05} & 0.18 & 0.11 \\ 
   &  &  & 4 & \textbf{0.98} & 0.91 & 0.48 & 0.12 & \textbf{0.04} & 0.08 & 0.20 & 0.10 \\ 
   &  &  & 5 & \textbf{0.91} & 0.92 & 0.83 & 0.12 & \textbf{0.04} & 0.06 & \textbf{0.04} & 0.12 \\ 
   &  &  & 6 & \textbf{0.98} & 1.00 & 0.75 & 0.09 & \textbf{0.03} & 0.08 & \textbf{0.04} & 0.11 \\ 
   &  &  & 7 & \textbf{0.89} & 0.46 & 0.52 & 0.05 & \textbf{0.03} & \textbf{0.03} & 0.08 & 0.11 \\ 
   &  &  & 8 & \textbf{0.92} & 0.78 & 0.74 & 0.10 & \textbf{0.05} & 0.06 & 0.06 & 0.07 \\ 
   \hline
  500 & 10 & 0.10 & 1 & \textbf{0.31} & $-$ & 0.06 & 0.12 & \textbf{0.02} & $-$ & \textbf{0.03} & 0.10 \\ 
   &  &  & 2 & \textbf{0.45} & $-$ & 0.07 & 0.12 & \textbf{0.03} & $-$ & \textbf{0.03} & 0.07 \\ 
   &  &  & 3 & \textbf{0.34} & $-$ & 0.18 & 0.16 & \textbf{0.03} & $-$ & 0.19 & 0.14 \\ 
   &  &  & 4 & \textbf{0.45} & $-$ & 0.20 & 0.16 & \textbf{0.02} & $-$ & 0.22 & 0.11 \\ 
   &  &  & 5 & 0.33 & $-$ & \textbf{0.04} & 0.12 & 0.06 & $-$ & \textbf{0.02} & 0.14 \\ 
   &  &  & 6 & \textbf{0.45} &$-$  & 0.03 & 0.08 & \textbf{0.05} & $-$ & \textbf{0.01} & 0.12 \\ 
   &  &  & 7 & \textbf{0.34} & $-$ & 0.12 & 0.09 & \textbf{0.05} & $-$ & 0.09 & 0.15 \\ 
   &  &  & 8 & \textbf{0.34} & $-$ & 0.04 & 0.16 & \textbf{0.03} & $-$ & \textbf{0.05} & 0.13 \\ 
   \hline
\end{tabular}
 \end{adjustbox}
\end{center}
\caption{Simulated power and type-I error of PKLM, Q, Little and JJ for $n=200$, $p=4$ and $n=500$, $p=10$. We use $r=0.65$, $0.35$ and $0.1$. Cases $1-8$ describe different data distributions. The experiments were repeated $300$ times and the parameter setting for PKLM described above was used. }
\label{tab3}
\end{table}

\begin{table}
\centering
\begin{adjustbox}{totalheight=\textheight}
\begin{tabular}{rrrl|rrrr|rrrr}
  \hline
  & & & & \multicolumn{4}{c}{Power} & \multicolumn{4}{c}{Type-I Error} \\
n & p & r & case & PKLM & Q & Little & JJ & PKLM & Q & Little & JJ \\ 
  \hline
500 & 20 & 0.65 & 1 & \textbf{0.39} & $-$ & 0.36 & 0.06 &  \textbf{0.02} & $-$ &  \textbf{0.05} & 0.09 \\ 
   &  &  & 2 & \textbf{0.91} & $-$ & 0.48 & 0.08 & \textbf{0.03} & $-$ & \textbf{0.05} & 0.10 \\ 
   &  &  & 3 & \textbf{0.33} & $-$ & 0.49 & 0.20 & \textbf{0.03} & $-$ & 0.24 & 0.11 \\ 
   &  &  & 4 & \textbf{0.90} & $-$ & 0.40 & 0.14 & \textbf{0.04} & $-$ & 0.22 & 0.11 \\ 
   &  &  & 5 & 0.32 & $-$ & \textbf{0.64} & 0.14 & \textbf{0.04} & $-$ & \textbf{0.04} & 0.08 \\ 
   &  &  & 6 & \textbf{0.93} & $-$ & 0.39 & 0.25 & \textbf{0.04} & $-$ & \textbf{0.01} & 0.09 \\ 
   &  &  & 7 & \textbf{0.33} & $-$ & 0.37 & 0.07 & \textbf{0.03} & $-$ & 0.09 & 0.10 \\ 
   &  &  & 8 & \textbf{0.23} & $-$ & 0.25 & 0.14 & \textbf{0.04} & $-$ & 0.06 & \textbf{0.03} \\
   \hline
  500 & 20 & 0.35 & 1 & \textbf{0.45} & $-$ & 0.22 & 0.08 & \textbf{0.03} &  $-$& \textbf{0.04} & 0.09 \\ 
   &  &  & 2 & \textbf{0.90} & $-$ & 0.22 & 0.09 & \textbf{0.03} & $-$ & \textbf{0.04} & 0.08 \\ 
   &  &  & 3 & \textbf{0.43} & $-$ & 0.35 & 0.18 & \textbf{0.02} & $-$ & 0.34 & 0.12 \\ 
   &  &  & 4 & \textbf{0.89} & $-$ & 0.33 & 0.20 & \textbf{0.03} & $-$ & 0.31 & 0.15 \\ 
   &  &  & 5 & \textbf{0.46} & $-$ & 0.24 & 0.09 & \textbf{0.02} & $-$ & \textbf{0.02} & 0.12 \\ 
   &  &  & 6 & \textbf{0.91} & $-$ & 0.14 & 0.14 & \textbf{0.02} & $-$ & \textbf{0.03} & 0.10 \\ 
   &  &  & 7 & \textbf{0.41} & $-$ & 0.22 & 0.11 & \textbf{0.02} & $-$ & 0.11 & 0.10 \\ 
   &  &  & 8 & \textbf{0.52} & $-$ & 0.18 & 0.08 & \textbf{0.03} & $-$ & \textbf{0.04} & 0.07 \\
   \hline
  500 & 20 & 0.10 & 1 & \textbf{0.13} & $-$ & 0.00 & 0.14 & \textbf{0.03} & $-$ & \textbf{0.00} & 0.10 \\ 
   &  &  & 2 & \textbf{0.24} & $-$ & 0.01 & 0.14 & \textbf{0.04} & $-$ & \textbf{0.01} & 0.12 \\ 
   &  &  & 3 & 0.08 & $-$ & 0.21 & 0.16 & 0.06 & $-$ & 0.22 & 0.10 \\ 
   &  &  & 4 & \textbf{0.26} & $-$ & 0.27 & 0.08 & \textbf{0.04} & $-$ & 0.31 & 0.13 \\ 
   &  &  & 5 & \textbf{0.12} & $-$ & 0.00 & 0.10 & \textbf{0.03} & $-$ & \textbf{0.00} & 0.19 \\ 
   &  &  & 6 & \textbf{0.19} & $-$ & 0.00 & 0.11 & \textbf{0.05} & $-$ & \textbf{0.00} & 0.18 \\ 
   &  &  & 7 & \textbf{0.07} & $-$ & 0.08 & 0.12 & \textbf{0.04} & $-$ & 0.07 & 0.12 \\ 
   &  &  & 8 & \textbf{0.07} & $-$ & 0.02 & 0.11 & \textbf{0.04} & $-$ & \textbf{0.00} & 0.16 \\
   \hline
  1000 & 40 & 0.65 & 1 & \textbf{0.20} & $-$ & 0.00 & 0.09 & \textbf{0.05} & $-$ & \textbf{0.00} & 0.15 \\ 
   &  &  & 2 & \textbf{0.95} & $-$ & 0.00 & 0.12 & \textbf{0.03}  & $-$ & \textbf{0.00} & 0.14 \\ 
   &  &  & 3 & \textbf{0.23} &  $-$& 0.00 & 0.29 & \textbf{0.02} & $-$ & \textbf{0.00} & 0.17 \\ 
   &  &  & 4 & \textbf{0.94} & $-$ & 0.00 & 0.26 & \textbf{0.05} & $-$ & \textbf{0.00} & 0.17 \\ 
   &  &  & 5 & \textbf{0.16} & $-$ & 0.00 & 0.30 & \textbf{0.02} &  $-$& \textbf{0.00} & 0.19 \\ 
   &  &  & 6 & \textbf{0.97} & $-$ & 0.00 & 0.26 & \textbf{0.02} & $-$ & \textbf{0.00} & 0.19 \\ 
   &  &  & 7 & \textbf{0.23} & $-$ & 0.00 & 0.11 & \textbf{0.02} & $-$ & \textbf{0.00} & 0.10 \\ 
   &  &  & 8 & \textbf{0.13} &$-$  & 0.00 & 0.17 & \textbf{0.03} & $-$ & \textbf{0.00 }& 0.12 \\
   \hline
  1000 & 40 & 0.35 & 1 & \textbf{0.35} & $-$ & 0.00 & 0.12 & \textbf{0.02} & $-$ & \textbf{0.00} & 0.11 \\ 
   &  &  & 2 & \textbf{0.97} & $-$ & 0.00 & 0.13 & \textbf{0.05} & $-$ & \textbf{0.00} & 0.10 \\ 
   &  &  & 3 & \textbf{0.37} & $-$ & 0.00 & 0.30 & \textbf{0.03} & $-$ & \textbf{0.00} & 0.30 \\ 
   &  &  & 4 & \textbf{0.96} & $-$ & 0.00 & 0.33 & \textbf{0.04} & $-$ & \textbf{0.00} & 0.27 \\ 
   &  &  & 5 & \textbf{0.32} & $-$ & 0.00 & 0.14 & \textbf{0.04} & $-$ & \textbf{0.00} & 0.11 \\ 
   &  &  & 6 & \textbf{0.98} & $-$ & 0.00 & 0.16 & \textbf{0.03} & $-$ & \textbf{0.00} & 0.10 \\ 
   &  &  & 7 & \textbf{0.36} & $-$ & 0.00 & 0.11 & \textbf{0.02} & $-$ & \textbf{0.00} & 0.08 \\ 
   &  &  & 8 & \textbf{0.30} & $-$ & 0.00 & 0.16 & \textbf{0.02} & $-$ & \textbf{0.00} & 0.10 \\ 
   \hline
  1000 & 40 & 0.10 & 1 & \textbf{0.08} & $-$ & 0.00 & 0.15 & \textbf{0.02} & $-$ & \textbf{0.00} & 0.12 \\ 
   &  &  & 2 & \textbf{0.32} & $-$ & 0.00 & 0.12 & \textbf{0.02} & $-$ & \textbf{0.00} & 0.10 \\ 
   &  &  & 3 & \textbf{0.06} & $-$ & 0.00 & 0.13 &\textbf{ 0.05} & $-$ & \textbf{0.00} & 0.20 \\ 
   &  &  & 4 & \textbf{0.25} & $-$ & 0.00 & 0.25 & \textbf{0.03} & $-$ & \textbf{0.00} & 0.28 \\ 
   &  &  & 5 & \textbf{0.08} & $-$ & 0.00 & 0.11 & \textbf{0.03} & $-$ & \textbf{0.00} & 0.09 \\ 
   &  &  & 6 & \textbf{0.27} & $-$ & 0.00 & 0.09 & \textbf{0.04 }& $-$ & \textbf{0.00} & 0.11 \\ 
   &  &  & 7 & \textbf{0.07} & $-$ & 0.00 & 0.16 & \textbf{0.03} & $-$ & \textbf{0.00} & 0.13 \\ 
   &  &  & 8 & \textbf{0.07} & $-$ & 0.00 & 0.15 & \textbf{0.04} & $-$ & \textbf{0.00} & 0.08 \\ 
   \hline
\end{tabular} 
 \end{adjustbox}
\caption{Simulated power and type-I error of PKLM, Q, Little and JJ for $n=500$, $p=20$ and $n=1000$, $p=40$. We use $r=0.65$, $0.35$ and $0.1$. Cases $1-8$ describe different data distributions. The experiments were repeated $300$ times and the parameter setting for PKLM described above was used. }
\label{tab4}
\end{table}

\subsection{Real Data}

\begin{table}[h]
\centering
\resizebox{\textwidth}{!}{%
\begin{tabular}{lrr|rrrr|rrrr}
  \hline
   & && \multicolumn{4}{c}{Power} & \multicolumn{4}{c}{Type-I Error}   \\
dataset & n & p & PKLM & Q & Little & JJ & PKLM & Q & Little & JJ \\ 
  \hline
iris & 150 &   4 & 0.41 & \textbf{0.91} & 0.84 & 0.27 & \textbf{0.03} & \textbf{0.04} & \textbf{0.03} & 0.16 \\ 
  blood.transfusion & 748 &   4 & 0.48 & 0.97 & \textbf{1.00} &  \texttt{NA}  & \textbf{0.01} & 0.06 & \textbf{0.04} & \texttt{NA} \\ 
  airfoil & 1503 &   6 & \textbf{0.92} & 0.13 & 0.17 & 0.09 & \textbf{0.02} & \textbf{0.03} & 0.06 & 0.42 \\ 
  seeds & 210 &   7 & 0.64 & \textbf{0.74} & 0.57 & 0.24 & \textbf{0.05} & \textbf{0.02} & \textbf{0.02} & 0.10 \\ 
  yacht & 308 &   7 & 0.60 & 0.56 & \textbf{0.76} & 0.24 & \textbf{0.03} & 0.07 & \textbf{0.05} & 0.24 \\ 
  yeast & 1484 &   8 & \textbf{0.82} & 0.52 & 0.15 & 0.14 & \textbf{0.05} & 0.06 & 0.23 & 0.85 \\ 
  glass & 214 &   9 & 0.10 & 0.02 & \textbf{0.20} & 0.20 & \textbf{0.01} & \textbf{0.00} & \textbf{0.03} & 0.33 \\ 
  concrete.compression & 1030 &   9 & 0.64 & 0.48 & \textbf{0.81} & 0.47 & \textbf{0.04} & \textbf{0.04} & \textbf{0.05} & 0.41 \\ 
  wine.quality.red & 1599 &  11 & \textbf{0.81} & $-$ & 0.72 & 0.80 & \textbf{0.04} & $-$ & 0.15 & 0.52 \\ 
  wine.quality.white & 4898 &  11 & \textbf{0.98} & $-$ & 0.96 & 0.87 & \textbf{0.04} & $-$ & 0.10 & 0.79 \\ 
  planning.relax & 182 &  12 & \textbf{0.29} & $-$ & 0.20 & 0.14 & \textbf{0.00} & $-$ & \textbf{0.00} & \texttt{NA}  \\ 
  climate.model.crashes & 540 &  19 & \textbf{0.18} & $-$ & 0.22 & 0.47 & \textbf{0.00} &  $-$& \textbf{0.00} & \texttt{NA}  \\ 
  ionosphere & 351 &  32 & \textbf{0.45} & $-$ & 0.97 & 0.18 & \textbf{0.00} & $-$ & 0.06 & \texttt{NA} \\ 
   \hline
\end{tabular} %
}
\caption{Simulated power and level of PKLM, Q, Little and JJ for $13$ real datasets. We use $p_{miss}=0.3$. The experiments were repeated $300$ times and the parameter setting for PKLM described above was used. The \texttt{NA}s for some values of the JJ-test indicate that the test was not computable in any of the $300$ repetitions due to not enough observations in enough usable missingness groups.}\label{realresults}
\end{table}

We used $13$ real datasets with varying number of observations $n$ and dimensions $p$ for further empirical assessment of the PKLM-test and comparison to the other three tests. The datasets are available in the UCI machine learning repository\footnote{\url{https://archive.ics.uci.edu/ml/index.php}}. We preprocessed the data by cancelling factor variables, in order to be able to run all other three tests. However, we kept numerical variables with only few unique values. 

For the generation of the \texttt{NA}s, we use an overall probability of missingness of $p_{miss}=0.3$ (not to be confused with $r$ from the last subsection, denoting the number of complete cases).  We used a random MAR generation through the function \texttt{ampute} of the \textsf{R}-package \texttt{mice}. This function can randomly generate realistic MAR mechanisms, see e.g., \citet{ amputepaper}. Each experiment was run $\texttt{nsim}=300$ times to compute the type-I error and power. We used the following hyperparameter setting for the computation of our PKLM-test: number of permutations $\texttt{nrep} = 30$, number of projections $\texttt{num.proj} = 300$, minimal node size in a tree $\texttt{min.node.size} = 10$, number of fitted trees per projection $\texttt{num.trees.per.proj} = 200$ and maximal number of collapsed classes allowed in a projection $\texttt{size.resp.set} = 2$. The results are shown in Table \ref{realresults}. Our test is again very competitive with the best power in $7$ out of $13$ datasets, conditional on valid type-I errors. The Little-test shows also often good performance, though given the problematic level displayed in the previous section, this has to be considered with some care. 
The Q-test also has relatively high power in the situations where it can be calculated. However, due to computational time we only run the Q-test for $p \leq 10$. All in all, we see that the Q-test quickly gets infeasible for large $p$ and $n$ and the advantage of the PKLM-test strengthens with increasing $p$.

\section{Extension}\label{extension}

In addition to the ``global test'' of MCAR, we can study the effect of single variables: For any given variable $k=1,\ldots,p$, we can calculate 
\[
\hat{U}^{-k}= \frac{1}{|\mathcal{P}_{-k}|}\sum_{i \in \mathcal{P}_{-k}} \hat{U}^{(A_i,B_i)} ,
\]
where $\mathcal{P}_{-k}$ are all pairs of projections $(A_i, B_i)$ from the $N$ randomly chosen ones, with $B_i$ not containing variable $k$. We can use the analogous calculation based on the permuted missingness matrix $\mathbf{M}$
\[
\hat{U}^{-k}_{\sigma_{\ell}}= \frac{1}{|\mathcal{P}_{-k}|}\sum_{j \in \mathcal{P}_{-k}} \hat{U}^{(A_j,B_j)}_{\sigma_{\ell}},
\]
to obtain the $p$-value as in \eqref{p-val}. This ``partial'' $p$-value is valid and corresponds to the effect of removing the patterns induced by variable $k$. Indeed, assume the difference in the distribution of two patterns stems from a variable $j$ alone. If $j \in B$, a perfect classifier will be able to reliably differentiate the two, leading to a high value for $\hat{U}^{-k}$ relative to the permutation values. If $j$ is not forming the labels, we will not test these two classes against each other and thus not be able spot this difference. As such, we might expect to see a high $p$-value for $\hat{U}^{-j}$, when variable $j$ is removed, but a tendency to low $p$-values for $\hat{U}^{-k}$, $k \neq j$.

\begin{figure}[h]
\centering
\includegraphics[width=8cm]{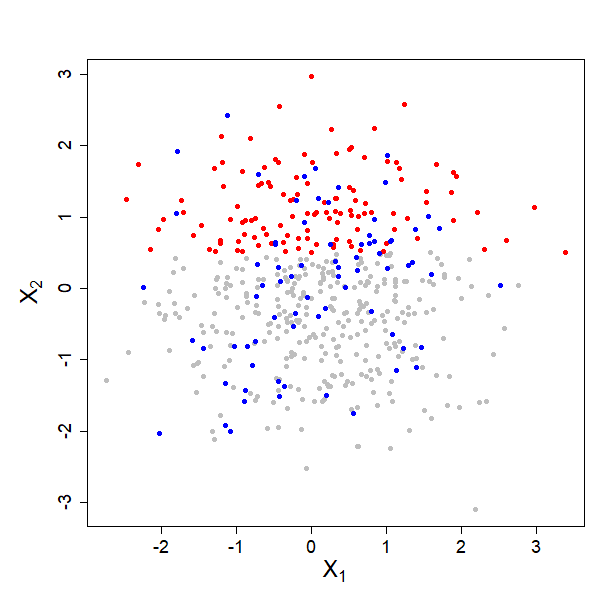}
\caption{$X_1$ and $X_2$ of the fully observed data in the simulated example of Section \ref{extension}. In red: Points with missing values in $X_1$, in blue: points with missing values in $X_2$. The blue points are randomly scattered, independently of the value of $X_1$, while in the red points, there is a visible trend towards having more missing values in $X_1$ for higher values of variable $X_2$.}
\label{fig:geo}
\end{figure}

We illustrate the usefulness of partial $p$-values with an example. Let $C_{-k} = \{1,\ldots,p\}\setminus \{k\}$. We assume $\mathbf{X}_{\bullet, C_{-k}}$ has a MCAR missingness structure, in particular, we simulate below the MCAR mechanism described in Section \ref{sec: simulations} with $r=0.65$. Let $k=1$ and assume that this first column of observations $\mathbf{X}_{\bullet,1}$ has missingness depending on the observed values of $\mathbf{X}_{\bullet, 2}$. For instance, each value is missing if the mean of the corresponding row $\mathbf{X}_{j,2}$ is larger than $0.5$. In this simple example $\mathbf{X}$ is MAR, but $\mathbf{X}_{\bullet, C_{-1}}$ is MCAR. We simulate this example, with $p=4$ and $n=500$, $\mathbf{X}_{i,\bullet}$ being independent standard Gaussian and the MAR/MCAR mechanism as described above. The first two fully observed components, $X_1$ and $X_2$, are shown in Figure \ref{fig:geo}. As before, we set \texttt{num.trees.per.proj}=$200$ and use $100$ projections. In this example, we are only able to spot any difference when $j=1$ is used to build the labels.

Our test reliably delivers small $p$-values ($ \leq 0.05$) for the three partial tests based on projections potentially including variable $1$, i.e., sets of projections $\mathcal{P}_{-2}$, $\mathcal{P}_{-3}$, and $\mathcal{P}_{-4}$ and a high $p$-value for the partial test based on $\mathcal{P}_{-1}$. Thus in this sense, the test detects that the main culprit of the MAR mechanism lies in the first variable.



\section{Concluding Remarks}\label{discuss}
In this paper we presented the powerful, flexible and easy-to-use PKLM-test for the MCAR assumption on the missingness mechanism of a dataset. We proved the validity of the $p$-value of the test and showed its power over a wide range of distributions. We also provided an extension allowing to do partial tests, that may shed light on the source of the violation of the MCAR assumption. Naturally, with some slight adaptations the test can be used as a general test of homogeneity of $G$ different groups in the sense that it tests whether $G$ different groups have the same distribution.

\newpage

\appendix
\setcounter{section}{0}
\section{Proofs} \label{proofsection}

\propequivalence*

\begin{proof}
We first show $H_0$ of \eqref{test2} implies $H_0$ of \eqref{test3}. Let $A, B$ be arbitrary. If they are such that there is only one label, there is nothing to test, so we may assume to have $|G(A,B)| \geq 2$ patterns in $\mathbf{X}_{\mathcal{N}_{A}, A}$. This means that $A \subset \boldsymbol{o}_{ij}$ for \emph{all} patterns $i,j \in G(A,B)$. This simply follows because, by construction, each of the $|G(A,B)|$ patterns in $\mathbf{X}_{\mathcal{N}_{A}, A}$ has the elements in $A$ fully observed. But since by assumption for all $i,j \in \{1,\ldots, G \}$, $F_{i,\boldsymbol{o}_{ij}}=F_{j,\boldsymbol{o}_{ij}}$ and $A\subset \boldsymbol{o}_{ij}$, this immediately implies that $F_{i,A}=F_{j,A}$ for all $i,j \in \{1,\ldots, G \}$ and thus $F_{g,A} = \sum_{j\in G(A,B)\setminus g} \omega_j^g F_{j,A}  $. Since $A,B$ were arbitrary, one direction follows.

We now show that $H_0$ of \eqref{test3} implies $H_0$ of \eqref{test2}. The proof is based on the following claim: Consider $G$ arbitrary distribution functions $F_1, \ldots, F_G$ and weights $(\omega^{g}_j)_{j=1}^{G-1}$, $j=1,\ldots, G$ such that $\sum_{j=1}^{G-1} \omega^{g}_j = 1$ for all $j$. Then 
\begin{align}\label{H001}
     F_{g} = \sum_{j\in\{1,\ldots, G\}\setminus g} \omega_j^g F_{j}, \text{   } \forall g \in  \{1,\ldots, G\}  \implies   F_{i}= F_{j}, \text{   }  \forall i \neq j \in \{1,\ldots, G\}.
\end{align}
We prove the implication by induction: Consider first $G=3$. Assuming the LHS of \eqref{H001} and plugging the equation for $F_{2}$ into the equation for $F_{1}$, we obtain:
\begin{align*}
    F_{1}&=w_2^1 w_1^2 F_{1}  + w_2^1 w_3^2 F_{3} + w_3^1 F_{3} \\
    &=w_2^1 w_1^2 F_{1}  + (w_2^1 w_3^2 + w_3^1) F_{3},
\end{align*}
which implies $(1-w_1^2w_2^1)F_{1}= (w_2^1w_3^2+w_3^1)F_{3}$. Since 
\begin{align*}
    &1 = w_2^1 + w_3^1
    = w_2^1(w_3^2+w_1^2) +w_3^1
    = w_2^1w_3^2 + w_2^1w_1^2 + w_3^1,
\end{align*}
we have the equality $(1-w_1^2w_2^1)=(w_2^1w_3^2+w_3^1)$ and thus $F_{1}=F_{3}$. Plugging this back into the equivalent equation for $F_{2}$, we obtain $F_{1}=F_{2}=F_{3}$. Now assume \eqref{H001} is true for $G$ distributions $F_{1}, \ldots, F_{G}$ and we now would like to prove it for $G+1$. Assume wlog that the weight of $F_{2}$ in the equation of $F_{1}$ is nonzero (there will always be at least one such distribution $F_{2}, \ldots, F_{G}$). Using the same trick as above, we may plug say the equation for $F_{2}$ into $F_{1}$, thereby reducing the number of equations/distributions to $G$. By the induction assumption this implies that $F_{1}=F_{3}=\ldots=F_{G}$. But immediately this also implies that $F_{2}=F_{1}$ and implies \eqref{H001}. With this result we can now proof the that $H_0$ of \eqref{test3} implies $H_0$ of \eqref{test2}. 

Take two arbitrary groups $i,j$ and $A=\boldsymbol{o}_{ij}$ and take $B=A^c$. To ease notation we just wlog take $i=1$ and $j=2$. Then $A=\boldsymbol{o}_{12}$ contains the dimensions for which patterns $1$ and $2$ have fully observed values. Thus, observations in $\mathbf{X}_{\mathcal{N}_{A}, A}$ contain draws from $F_{1,\boldsymbol{o}_{12}}$ and $F_{2,\boldsymbol{o}_{12}}$. Since by assumption
\begin{align}
     H_0: F_{g,A} = \sum_{j\in G(A,B)\setminus g}& \omega_j^g F_{j,A}  \text{   },  \forall g \in G(A,B),
\end{align}
it follows by \eqref{H001}, that $F_{i,A}=F_{j,A}$ for all $i,j \in G(A,B)$ and thus in particular, $F_{1,A}=F_{2,A}$. Since we will have $A=\boldsymbol{o}_{ij}$ for all groups $i\neq j$, $H_0$ of \eqref{test2} holds.
\end{proof}

\densityratiothm*
\begin{proof}
Based on the definitions of $  p^{(A,B)}_g(x)$, $f^{(A,B)}_g(x)$ and $\pi^{(A,B)}_g$ we obtain by Bayes Rule,
\begin{align} \label{bayes}
    p^{(A,B)}_g(x) = \frac{f^{(A,B)}_g(x)\pi^{(A,B)}_g}{\sum_{j\in G(A,B)} \pi^{(A,B)}_j f^{(A,B)}_j(x)},
\end{align}
assuming the existence of densities $f_g$ of distributions $F_g$ for each $g\in G(A, B)$.
Following the same steps as in \cite{Cal2020}, we get that the logarithm of the (joint) density ratio for testing $H_0$ vs $H_1$ of (\ref{test4}), given by
\begin{align}
    \log \frac{  f^{(A,B)}_g(x) (1-\pi^{(A,B)}_g)}{\sum_{j\in G(A,B)\setminus g} \pi^{(A,B)}_j f^{(A,B)}_j(x) }.\label{densityratio}
\end{align}
We reformulate the fraction in \eqref{densityratio} in terms of $p^{(A,B)}_g$, starting from (\ref{bayes}):
\begin{align*}
    p^{(A,B)}_g(x) \sum_{j\in G(A,B)\setminus g} \pi^{(A,B)}_j f^{(A,B)}_j(x) &= (\pi^{(A,B)}_g - p^{(A,B)}_g(x)\pi^{(A,B)}_g)f^{(A,B)}_g(x)\\
    &=\pi^{(A,B)}_g( 1  - p^{(A,B)}_g(x))f^{(A,B)}_g(x).
\end{align*} Thus, the inside of the logarithm of (\ref{densityratio}) is given by the following function of $p^{(A,B)}_g$:
\begin{align*}
     \frac{f^{(A,B)}_g(x)(1-\pi^{(A,B)}_g) }{ \sum_{j\in G(A,B)\setminus g} \pi^{(A,B)}_j f^{(A,B)}_j(x)   } &= 
     \frac{ 1-\pi^{(A,B)}_g}{\pi^{(A,B)}_g}\frac{p^{(A,B)}_g(x) }{1-p^{(A,B)}_g(x) }.
\end{align*}

\end{proof}

\KLlemmalabel*

\begin{proof}
From the proof of Lemma \ref{densityratiothmlabel}, we know that $U^{(A,B)}_g$ can be rewritten as
\begin{align}
    U^{(A,B)}_g &:= \frac{1}{|S_{f_g^{(A,B)}} |}\sum_{i\in S_{f_g^{(A,B)}}} \left(\log\frac{p^{(A,B)}_g(x_{i})}{1-p^{(A,B)}_g(x_{i})} - \log \frac{\pi^{(A,B)}_g}{1-\pi^{(A,B)}_g}\right) \notag\\
    &=\frac{1}{n_g}\sum_{i\in S_{f_g^{(A,B)}}} \log  \frac{f^{(A,B)}_g(x_i)(1-\pi^{(A,B)}_g) }{ \sum_{j\in G(A,B)\setminus g} \pi^{(A,B)}_j f^{(A,B)}_j(x_i)   }.\label{Ustat2}
\end{align} 
Since $n_g/n \rightarrow \pi^{(A,B)}_g \in (0,1)$ and the $x_i$ are i.i.d., the result follows from the law of large numbers.
\end{proof}

\pvalprop*

\begin{proof}

Let $\mathbf{A}=(A_1,\ldots, A_{N})$ and $\mathbf{B}=(B_1,\ldots, B_{N})$ be two sets of $N$ projections. Let $G_1, \ldots G_{L^*}$ be all possible permutations of the rows of the missingness matrix $\mathbf{M}$, such that
\[
G_{\ell}(\mathbf{X}^*, \mathbf{M}, \mathbf{A}, \mathbf{B})= (\mathbf{X}^*, \mathbf{M}_{\sigma_{\ell}}, \mathbf{A}, \mathbf{B}),
\]
for $\ell=1,\ldots, L^*$. Note that, since we are only considering fully observed observations for all projections in $\mathbf{A}$, $\hat{U}$, a function of $(\mathbf{X}, \mathbf{M}, \mathbf{A}, \mathbf{B})$, is indeed a function of $(\mathbf{X}^*, \mathbf{M}, \mathbf{A}, \mathbf{B})$, while $\hat{U}_{\sigma_{\ell}}$ is a function of $G_{\ell}(\mathbf{X}^*, \mathbf{M}, \mathbf{A}, \mathbf{B})$. It also holds that under the null, that is under MCAR, that
\begin{equation}\label{Equaldist}
    (\mathbf{X}^*, \mathbf{M}, \mathbf{A}, \mathbf{B}) \stackrel{D}{=} (\mathbf{X}^*, \mathbf{M}_{\sigma_{\ell}}, \mathbf{A}, \mathbf{B}) = G_{\ell}(\mathbf{X}^*, \mathbf{M}, \mathbf{A}, \mathbf{B})\ \ \forall \ell = 1,\ldots L^*.
\end{equation}
This is true because, under MCAR, $\mathbf{M}$ and $\mathbf{X}^*$ are independent. Since by the i.i.d. assumption also $\mathbf{M}_{\sigma_{\ell}} \stackrel{D}{=} \mathbf{M}$ for all $\ell= 1,\ldots, L^*$ and since $\mathbf{A}$, $\mathbf{B}$ are also independent of $\mathbf{M}$, \eqref{Equaldist} follows.
As outlined for example in \citet{Hemerik2018}, this implies that under $H_0$,
\[
\Prob(Z \leq z \mid \mathbf{A}, \mathbf{B}) \leq z.
\]
Integrating over $(\mathbf{A}, \mathbf{B})$, results in \eqref{validpval}.
\end{proof}

\section{Additional Details and Computation Times} \label{app:comptime}

Here we provide more implementation details, discuss the complexity calculations in Table \ref{advantagetable} and show computation times of the different tests in the experiments.

\noindent    
\textbf{Numerical truncation.} In order to avoid numerical issues when calculating the density ratio with Expression (\ref{Ustat}) or the $\log$ thereof, if we get predicted probabilities $\hat p_{A}$ close to $0$ or $1$, we apply the following truncation function to $\hat p_{A}$:
    \begin{equation*}
        p(x) = \min(\max(x, 10^{-9}), 1-10^{-9}).
    \end{equation*}

\noindent 
\textbf{Hyperparameter Selection.} 
Generally speaking, it holds that ``the more the better'', certainly for the parameters $N$, $L$ and \texttt{num.trees.per.proj}. As such, the choice of those three parameters depends mostly on the computational power available to the user. For \texttt{size.resp.set}, this is not quite as clear, though we found a value of two to work well in most situations.

\noindent    
\textbf{PLKM Test.} We first consider the complexity of one Random Forest, which is in this case
\[
\texttt{num.tree} \cdot p n\log(n).
\]
Note that this includes the calculation of $\hat{p}$ on the test sample through the OOB-error. In total we do this \texttt{num.proj} times. However, we consider \texttt{num.tree} and \texttt{num.proj} independent of $n$ and $p$ and thus treating it as constant. In this case we end up with $p n\log(n)$. Finally we need to calculate the statistics $U$ and repeat this number of calculations a fixed number of times. This would add a factor $Bn$, where again we assume that $B$ does not grow with $n$ and $p$. As this is neglible compared to $p n\log(n)$, the complexity is given as $\mathcal{O}(p n\log(n))$.
\noindent

\noindent    
\textbf{Q-test.} The Q-test compares all groups leading to a complexity of $G^2$ to compare each group with any other. Additionally, the statistic used is an MMD type, so the complexity is $(n_1+n_2)^2$, where $n_1$, $n_2$ are the respective group sizes. The group size can be at worst $n/G$, which together results in $\mathcal{O}(n^2)$. The bootstrap on the other hand can also be ignored, as it simply results in a constant factor multiplied to $n^2$.
\noindent

\noindent    
\textbf{JJ and Little-test.} Both JJ- and Little-test rely on covariance estimation which scales as $n p^2$. This gives the $\mathcal{O}(n p^2)$ complexity for the Little-test. For the JJ-test one also needs an ordering operation to obtain the test statistics, with complexity $n \log(n)$, which results in overall complexity $\mathcal{O}(n (p^2 + \log(n)))$.
\noindent

As mentioned above, Table \ref{advantagetable} just shows how the complexity scales in $n$ and $p$ and, in case of our test, treats the number of projections as constants. One might argue that the number of projections should be a function of $p$ as well. Similarly, for ``small'' $p$ and small number of groups $G$, the Q-test can be faster than ours. Still the complexities provide a good illustration of how quickly the Q-test can become infeasible, when the number of groups (often a function of $p$) and/or the number of observation increases.

\section{Example of \lowercase{\citet{yuan2018}}} \label{app:equalgroupmeansandvar}

\citet{yuan2018} study settings where group means and variances are approximately equal across missingness patterns, such that MCAR tests based on differences in means and variances, such as the Little-test, have no power. We study one such example here: Let $p=2$ and $(Z_1,Z_2)$ be jointly multivariate normal with correlation zero and let $X_1=Z_1$ and 
\[
X_2 = 0.5 Z_1 + (1-0.25)^{1/2}Z_2.
\]
We set $X_2$ to $\texttt{NA}$ if
\[
X_1 \in (-\infty, -1.932] \cup (-0.314, 0.314] \cup (1.932, \infty).
\]
This corresponds to around $30\%$ missing values. Figure \ref{fig:simulationillustration} displays a histogram, plotting all observations of $X_1$ with $X_2$ missing for a simulation of $n=10'000$. This corresponds to the MAR example used in \citet[Section 3]{yuan2018} and we refer to their paper for more details.

We simulate the above distribution for $n=1000$ and run our PKLM-test with the same parameters as described in Section \ref{sec: simulations}. Though the deviation from MCAR cannot be detected through the first two moments in this example, our test reliably reaches a power of 1.

\begin{figure}
    \centering
    \includegraphics[width=8cm]{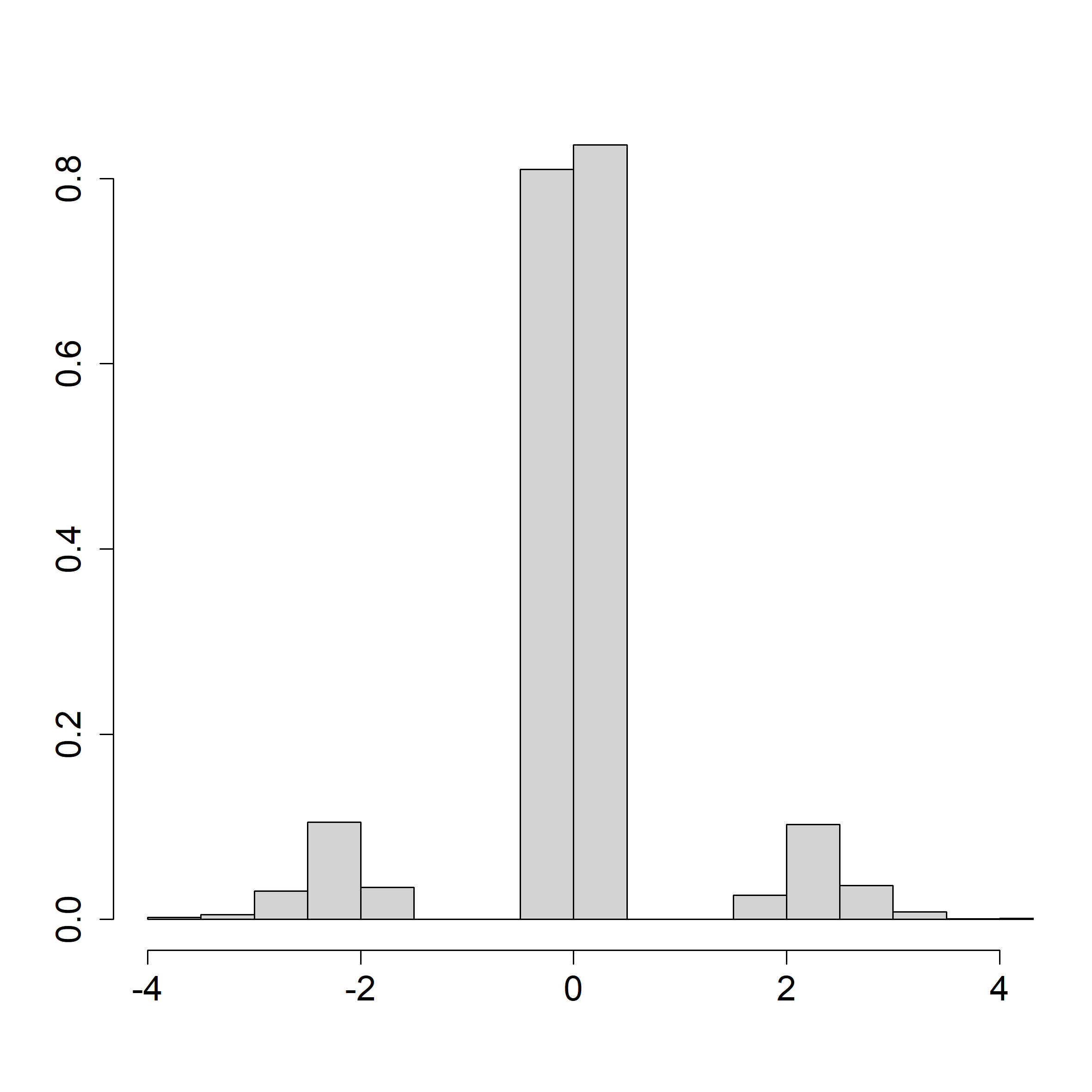}
    \caption{Histogram with relative frequencies of $X_1$ if the corresponding $X_2$ is $\texttt{NA}$.}
    \label{fig:simulationillustration}
\end{figure}

\vspace{\fill}\pagebreak




\bibliographystyle{apalike}
\bibliography{biblio}

\end{document}